\title{Utilitarians Without Utilities: \\ Maximizing Social Welfare for Graph Problems \\ using only Ordinal Preferences \\Full Version}
\author{
	Ben Abramowitz\\Department of Computer Science\\Rensselaer Polytechnic Institute\\abramb@rpi.edu
	\and
	Elliot Anshelevich\\Department of Computer Science\\Rensselaer Polytechnic Institute\\eanshel@cs.rpi.edu
}
\begin{document}
\maketitle

\newtheorem{theorem}{Theorem}
\newtheorem{corollary}{Corollary}
\newtheorem{lemma}{Lemma}
\newtheorem{definition}{Definition}
\newtheorem{claim}{Claim}
\newtheorem{observation}{Observation}
\newtheorem{example}{Example}

\begin{abstract}
We consider ordinal approximation algorithms for a broad class of utility maximization problems for multi-agent systems. In these problems, agents have utilities for connecting to each other, and the goal is to compute a maximum-utility solution subject to a set of constraints. We represent these as a class of graph optimization problems, including matching, spanning tree problems, TSP, maximum weight planar subgraph, and many others. We study these problems in the ordinal setting: latent numerical utilities exist, but we only have access to ordinal preference information, i.e., every agent specifies an ordering over the other agents by preference. We prove that for the large class of graph problems we identify, ordinal information is enough to compute solutions which are close to optimal, thus demonstrating there is no need to know the underlying numerical utilities. For example, for problems in this class with bounded degree $b$ a simple ordinal greedy algorithm always produces a ($b+1$)-approximation; we also quantify how the quality of ordinal approximation depends on the sparsity of the resulting graphs. In particular, our results imply that ordinal information is enough to obtain a 2-approximation for Maximum Spanning Tree; a 4-approximation for Max Weight Planar Subgraph; a 2-approximation for Max-TSP; and a 2-approximation for various Matching problems.
\end{abstract}

%____________________________________________________%
%____________________________________________________%
%____________________________________________________%

\section{Introduction}
Human beings are terrible at expressing their feelings quantitatively. For example, when forming collaborations people may be able to order their peers from ``best to collaborate with" to worst, but would have a difficult time assigning exact numeric values to the acuteness of these preferences.
In other words, even when numerical (possibly latent) utilities exist, in many settings it is much more reasonable to assume that we only know {\em ordinal} preferences: every agent specifies the {\em order} of their preferences over the alternatives, instead of a numerical value for each alternative. Recently there has been a lot of work using such an {\em implicit utilitarian approach}, especially for matching and social choice (see Related Work), in situations where obtaining true numerical utilities may be difficult. 
Amazingly, as this line of work shows, it is often possible to design algorithms and mechanisms which perform well using only ordinal information. 
In fact, ordinal algorithms often perform almost as well as {\em omniscient} mechanisms which know the true underlying numerical utilities, instead of just the ordinal preferences induced by these utilities.

In this work we consider a relatively general network formation setting. All problems considered herein are modeled by an undirected complete graph $G = (\mathcal{N,E})$, where the (symmetric) weight $w(x,y)$ of each edge $(x,y) \in \mathcal{E}$ represents the hidden utility of connecting agents $x,y \in \mathcal{N}$. The goal is to form a maximum-weight (i.e., maximum utility) graph which obeys some given constraints. For example, constraints may include bounds on the maximum degree, on component size, and many others. This framework includes such problems as matching, group formation, TSP, and many others as special cases; see Section \ref{preliminaries} for example constraints and how they lead to different important settings.

In many settings we may not have access to the true edge weights $w(x,y)$. Instead, each agent $x \in \mathcal{N}$ reports a strict \emph{preference ordering} over the other agents $\mathcal{N} - \{x\}$ with whom it can connect. We assume this ordering to be consistent with the latent weights, so that $w(x,y) > w(x,z)$ implies that $x$ prefers $y$ to $z$. While it is clearly impossible to form an optimal (i.e., maximum-utility) solution without direct knowledge of the edge weights, we show how to design good approximation algorithms for selecting a maximum weight subgraph of $G$, subject to a large set of constraints. As usual in this line of work, the measure of performance is simply the sum of the agent utilities. Our paper provides good approximations for a broad class of ordinal analogues to graph optimization problems representing utility maximization for multi-agent systems. Note that unlike {\em all} previous work mentioned here, we do not make additional assumptions about the structure of the edge weights: we {\em do not} assume either that the agent utilities are normalized (as in \cite{boutilier2015optimal,brandt2016handbook,caragiannis2017subset,caragiannis2011voting,christodoulou2016social}), nor that they form a metric space (as in \cite{anshelevich2015approximating,anshelevich2017randomized,anshelevich2016blind,anshelevich2016truthful,caragiannis2016truthful,goel2017metric,gross2017vote,skowron2017social}). Thus, our results demonstrate how well one can perform using only ordinal information without additional assumptions.

\paragraph{ABC Systems} More specifically, we define a class of constraints called \emph{ABC Systems} which consists of three types of constraints. The first two are the familiar constraints which bound the maximum degree of each node and the maximum component size, or number of nodes in a connected component. The third constraint is a much more general requirement called attachment which only applies to nodes that are already in the same connected component. The maximization problem for an ABC System is to compute a maximum weight subgraph $S \subseteq \mathcal{E}$ of an undirected, complete graph $G = (\mathcal{N,E})$ such that in $S$ every node has degree at most $b$, every connected component has size (number of nodes) at most $c$, and $S \in \mathcal{A}$ for some arbitrary \textbf{attachment set} $\mathcal{A}$. A collection of subgraphs $\mathcal{A}$ is an attachment set of $G$ if the following properties hold for all subgraphs $F \subseteq \mathcal{E}$:\\
1) \emph{Heredity:} If $F \in \mathcal{A}$ and $F' \subseteq F$ then $F' \in \mathcal{A}$.\\
2) \emph{Attachment:} If $F \in \mathcal{A}$ and $F + e \notin \mathcal{A}$ for some $e = (u,v) \notin F$, then there is a $(u,v)$-path in $F$. \footnote{We use the ``+" and ``-" notation when adding or removing a single edge to or from a set.}

Note that all three of our constraints possess the heredity property which enables greedy heuristics, like the Ordinal Greedy algorithm we introduce in Section \ref{sub:algorithm}, to construct valid solutions. The intuition behind the attachment property is that if $F \in \mathcal{A}$ but $F + e \notin \mathcal{A}$, then $e$ must have both endpoints in the same component. Therefore, the number of such edges whose addition would violate $\mathcal{A}$ within any component of size $x$ is bounded by $\frac{x \cdot \min{(b, x-1)}}{2} - (x-1)$, where $x$ is bounded by $c$.

The utility maximization objective for ABC Systems encompasses a wide variety of well-known problems central to algorithm design. The examples we address in this paper include Max Weight $b$-Matching, Maximum Weight Spanning Tree, Maximum Traveling Salesperson, and Max Weight Planar Subgraph. Our results also encapsulate many other interesting optimization problems for ABC Systems which we will not discuss directly, like finding the maximum weight subgraph with minimum girth $k$, maximum cycle length $l$, or which excludes a variety of graph minors (including all 2-edge-connected minors).  As we show, all such problems are amenable to knowing only secondhand ordinal information, instead of the true numerical utilities.

\subsection{Our Contributions}

\begin{table*}[t]
 \begin{tabular}{|c | c | c | c|}
 \hline
Maximization Problem & Ordinal Greedy & Omniscient Greedy & Best Known \\ [0.5ex]
 \hline \hline
  ABC System& $b+1$ & $b+1$ & - \\
 \hline
  AB System& $\max\{2,d+1\}$ & $\max\{2, d+1\}$ & - \\
 \hline
 Spanning Tree & 2 & 1 & 1 \\
 \hline
  Planar Subgraph & 4 & 3 & 72/25 \cite{calinescu2003new}\\
 \hline
 Traveling Salesperson & 2 & 2 & 9/7 \cite{paluch20097}\\
 \hline
 $b$-Matching & 2 & 2 & 1 \\
 %\hline
 %Matching & 2 & 2 & 1 \\
 \hline

\end{tabular}
\caption{Here we compare our results for Ordinal Greedy, known results for Omniscient Greedy, and the best known polynomial-time algorithm with full-information. All of our bounds are tight except for the one on Planar Subgraph.}
\label{table}
\end{table*}

Most algorithmic techniques for maximizing utility for the full-information setting do not translate to the ordinal information setting. These typically rely on non-local information, like comparisons between weights of non-adjacent edges, or comparing the total weights of sets of edges. This is not possible using only ordinal information. Even the fundamental, and well-studied \cite{vince2002framework,edmonds1971matroids,rado1942theorem}, \emph{Omniscient Greedy} algorithm, which adds edges in strictly non-increasing order of their weight, cannot be executed using only ordinal information. Instead, we focus on the natural {\em Ordinal Greedy} algorithm (defined in Section \ref{sub:algorithm}), which adds edges iteratively as long as the edge $(x,y)$ being added is the most preferred edge for both $x$ and $y$ out of all the possible edges which could be added at that time. Ordinal Greedy has some very nice properties: in addition to being natural and providing high-utility solutions (as we prove in this paper), it also always creates {\em pairwise stable} solutions: no pair of agents would have incentive to destroy some of their links and form a new link connecting them. Note that the performance of Ordinal Greedy can be very different from Omniscient Greedy: see Example \ref{example:MST} in Section \ref{sub:algorithm} for intuition of why this must be.

In this paper, we analyze the performance of Ordinal Greedy for many ABC Systems (see Table \ref{table}). We first prove that for general ABC Systems, Ordinal Greedy always produces a solution with weight at most factor $b+1$ away from optimum, and that this factor is tight. In other words, for general problems including all those in Table \ref{table}, as long as the number of connections for each node must be bounded by some small $b$, then using only ordinal information it is possible to compete with the best possible solution, and thus with any algorithm which knows the true numerical utilities. Such results tell us that when $b$ is small, there is no need to find out the hidden edge weights/utilities; knowing the ordinal preferences is good enough.

Second, we show that by relaxing the component size constraint (i.e., setting $c$ to be unbounded) we can achieve significant improvements. For convenience, we call ABC Systems with the component size constraint relaxed {\em AB Systems}. We prove that as long as any solution formed by such an AB System is guaranteed to be at most {\em $d$-sparse}, then Ordinal Greedy forms a solution within a factor of $d+1$ from optimum. Since the sparsity of a graph is at most half the average degree in any subgraph, we know that $d \leq \frac{b}{2}$. Therefore an AB System is at worst $(\frac{b}{2}+1)$-approximable, giving us a factor of 2 improvement over general ABC Systems.

This result is more powerful than it may first appear, as many important constraints yield sparse solutions. For example, since all tours and trees are 1-sparse, Ordinal Greedy provides a 2-approximation for both Maximum Traveling Salesperson and Maximum Weight Spanning Tree. And since all planar graphs are 3-sparse, we obtain a 4-approximation for Max Weight Planar Subgraph which uses only ordinal information.

Lastly, we consider Max Weight $b$-Matching, in which the only constraint is that each agent can be matched with at most $b$ others. This is simply an ABC System with unbounded $c$ and $\mathcal{A}$ being all possible sets. We prove that our approximation factor drops to a constant 2, regardless of the value of $b$.

To prove the ordinal approximations above, we first demonstrate that for any ABC System (and any graphic system with the heredity property defined earlier) Ordinal Greedy achieves its worst approximation on an instance with weight function $w: \mathcal{E} \rightarrow \{0,1\}$. We use this fact heavily to establish our approximation bounds, and believe it to be of independent interest. Note that similar results for Omniscient Greedy have relied critically on the fact that it selects edges in strictly non-increasing order by weight. Clearly this does not and cannot hold for the ordinal setting, as it is even possible for the minimum weight edge of the graph to be selected before the maximum weight edge. Because of this, our proofs require completely new approaches and techniques.

%____________________________________________________%
%____________________________________________________%
%____________________________________________________%
\subsection{Related Work}
Historically, it has been common to approach problems in the ordinal setting with a normative view by designing mechanisms which satisfy axiomatic properties, like stability or truthfulness. These axiomatic properties are useful in many applications, but do not provide a quantitative measure of the quality of a solution. The notion of distortion and the implicit utilitarian framework were first introduced by 
\cite{procaccia2006distortion} in the context of voting to provide such a measure. Since then the distortion, or approximation factor, of various ordinal utility maximization mechanisms has been studied, particularly for matchings \cite{anshelevich2016blind,anshelevich2016truthful,anshelevich2017tradeoffs,christodoulou2016social,caragiannis2016truthful} and social choice \cite{anshelevich2015approximating,anshelevich2017randomized,boutilier2015optimal,brandt2016handbook,caragiannis2017subset,caragiannis2011voting,goel2017metric,gross2017vote,skowron2017social}.

Our work is unlike that in social choice, since we consider network formation problems where agent preferences are expressed over one another. In the context of matchings, \cite{anshelevich2016blind,anshelevich2016truthful} develop various matching algorithms as a black-box to provide approximations for a variety of matching and clustering problems under the implicit utilitarian view. 
Additionally, \cite{christodoulou2016social} and \cite{caragiannis2016truthful} provide results for one-sided matchings and \cite{anshelevich2017tradeoffs} consider bipartite matchings. However, {\em all previous work} on approximation for utility maximization mentioned above either assumes the underlying weights form a metric space \cite{anshelevich2015approximating,anshelevich2017randomized,anshelevich2016blind,anshelevich2016truthful,anshelevich2017tradeoffs,goel2017metric,gross2017vote,skowron2017social,caragiannis2016truthful} or are normalized \cite{boutilier2015optimal,brandt2016handbook,caragiannis2017subset,caragiannis2011voting,christodoulou2016social}, with only two exceptions. The first exception is that Maximum Traveling Salesperson yields a 2-approximation without the metric assumption \cite{anshelevich2016truthful}. We prove this result as part of a much more general theorem using much more general techniques. The second is the result developed in the full-information setting by \cite{preis1999linear} and affirmed in the ordinal setting by \cite{anshelevich2016blind}, that Max Weight Matching yields a 2-approximation without any assumptions on the weights. We generalize this result to all $b$-Matchings instead of only $b=1$.

The work most similar to ours is \cite{anshelevich2016blind} which bounds the distortion of ordinal mechanisms for several problems, including Maximum Traveling Salesperson, but relies heavily on the assumption that the weights obey the triangle inequality. For the Ordinal Greedy algorithm, \cite{anshelevich2016blind} show the metric assumption implies that any two edges which are both most preferred by their respective endpoints at some iteration must be within a factor of 2 of one another, even if they are not adjacent. By contrast, this non-local information is unavailable to us in our model. Our paper is unique in that we identify a large class of problems for which assumptions on the weights are unnecessary to achieve good approximations to optimum with only ordinal information.

The Omniscient Greedy algorithm has been studied extensively. In fact, it is known to be optimal on exactly the set of independence systems (any system with the heredity property) which are matroids \cite{edmonds1971matroids,rado1942theorem}, which includes Maximum Spanning Trees. \cite{korte1978analysis} showed that Omniscient Greedy provides good approximations for many independence systems, including matching and symmetric TSP. 
\cite{dyer1985analysis} further demonstrated that Omniscient Greedy provides a tight 3-approximation for Max Weight Planar Subgraph. These results were later reformulated as k-extendibility by \cite{mestre2006greedy}, who applies this idea to a diverse set of problems, including $b$-Matching. Unfortunately, the proofs for all results just mentioned rely critically on Omniscient Greedy selecting edges in strictly non-increasing order, making them untenable in the ordinal setting. No ordinal algorithm can yield optimum solutions, even for matroids. However, as we show in Table \ref{table}, our results compete well with the best known polynomial-time algorithms for ABC Systems.

%_______________________________%
\section{Model and Problem Statements} \label{preliminaries}

The input for all problems in this paper is a set $\mathcal{N}$ of agents (nodes) of size $n$, and a strict preference ordering for each $x \in \mathcal{N}$ over the edges adjacent to $x$. The preference orderings reported by each agent are induced by a set of hidden symmetric weights $w(x,y) = w(y,x)$ for all $x, y \in \mathcal{N}$. The set of hidden weights corresponds to an undirected, complete graph $G = (\mathcal{N,E})$ with non-negative weight function $w : \mathcal{E} \rightarrow \mathbb{R}^+$. The transitive relation of the individual preference orderings for all agents determines a partial ordering $\sigma$ over all edges; note that some pairs of edges may end up being incomparable in $\sigma$ (see Example 1). The preference ordering $\sigma$ is said to be \textbf{consistent} with the hidden weights if $\forall x, y, z \in \mathcal{N}$, if $x$ prefers $y$ to $z$ then it must be that $w(x,y) \geq w(x,z)$. If an edge $e_1$ is known to be at least as large as edge $e_2$ according to this partial ordering $\sigma$, then we will say that $e_1$ {\em dominates} $e_2$ in $\sigma$. The problems we consider are optimization problems where for an instance $(w, \sigma)$ the objective is to compute the subgraph of $G$ with maximum total edge weight, subject to a set of constraints, knowing only $\sigma$.

For weight function $w$, we let $OPT_w$ be the optimal solution for the weights prescribed by $w$, and we let $w(OPT_w)$ be the total weight of the optimal solution evaluated by $w$. Likewise, we use $S$ to denote our constructed solution and $w(S)$ its weight. Our approximation factor for a problem is therefore $\alpha = \max\limits_{(w,\sigma)} \frac{w(OPT_w)}{w(S)}$, where $S$ is any solution returned by our algorithm for $(w,\sigma)$.

%_______________________________%
Recall the definition of ABC Systems. Given constraints $\mathcal{A}, b, c$ we can say without loss of generality that $b \leq c-1$, because if any node has $c$ or more neighbors in a component, this component would have to be of size greater than $c$. When $b = c - 1$ this effectively removes the node degree constraint. Similarly, when $c = n$, the component size is effectively unbounded. Therefore, $1 \leq b < c \leq n$. Likewise, when $\mathcal{A}$ = all subgraphs of $G$, this effectively annuls the attachment set constraint. Some specific problems which we consider in this paper are as follows.

\begin{description} \label{problems}
\item[Max ABC:] $c \leq n$, $b < c$, $\mathcal{A}$ = any attachment set of $G$

\item[Max AB:] $c = n$, $b < c$, $\mathcal{A}$ = any attachment set of $G$

\item[Maximum Spanning Tree:] $c = n$, $b = c - 1$, $\mathcal{A}$ = all acyclic subgraphs of $G$

\item[Maximum Traveling Salesperson:] $c = n$, $b = 2$, $\mathcal{A}$ = all subgraphs of $G$ without non-Hamiltonian cycles

\item[Max Weight Planar Subgraph:] $c = n$, $b = c - 1$, $\mathcal{A}$ = all planar subgraphs of $G$

\item[Max Weight $b$-Matching:] $c = n$, $b < c$, $\mathcal{A}$ = all subgraphs of $G$

\item[Max Weight Matching:] $c = n$, $b = 1$, $\mathcal{A}$ = all subgraphs of $G$
\end{description}

\section{Algorithmic Framework} \label{framework}
In this section we define the Ordinal Greedy algorithm and reveal some of its salient properties. Rather than limit ourselves only to ABC Systems, in this section we consider general graphic {\em independence systems}. An independence system for our setting is a pair $(\mathcal{E, L})$ where $\mathcal{E}$ corresponds to the set of edges in some graph and $\mathcal{L}$ is a collection of subsets of $\mathcal{E}$ such that if $F \in \mathcal{L}$ and $F' \subseteq F$ then $F' \in \mathcal{L}$. The sets in $\mathcal{L}$ are called independent. It is easy to see that all ABC Systems are independence systems because their three constraints possess this heredity property. Let $\mathcal{B}$ denote the set of all subgraphs in which all nodes have degree at most $b$ and let $\mathcal{C}$ denote the set of all subgraphs in which all connected components have size at most $c$. Our Max ABC problem can be restated as: Given a graph $G = (\mathcal{N,E})$, attachment set $\mathcal{A}$, degree limit $b$ and component size limit $c$, compute the maximum weight subgraph in $\mathcal{L = A \cap B \cap C}$.

%_______________________________%
\subsection{The Ordinal Greedy Algorithm} \label{sub:algorithm}
In an ordinal setting, algorithms only have access to a partial ordering, or set of preference orderings, which provide strictly local information about the preferences of each agent. This precludes the use of algorithms which require comparisons between the weights of non-adjacent edges. In fact, it is not difficult to see that no ordinal algorithm can be guaranteed to compute the optimal solution for even simple settings, e.g., forming a matching \cite{anshelevich2016blind}. However, the Ordinal Greedy algorithm defined below performs well in this setting because it relies on strictly local information. Ordinal Greedy starts from the empty set and builds up a sequence of intermediate solutions by adding locally optimal edges at each iteration which do not violate a set of constraints, i.e., preserve independence. To understand how this heuristic is applied to the ordinal setting, we must formalize what it means for an edge to be locally optimal.

\begin{definition} \label{def:undominated} Undominated Edge\\
Given a set $E$ of edges, $(u,v) \in E$ is undominated if for all $(u, x)$ and $(v, y)$ in $E$, $w(u,v) \geq w(u,x)$ and $w(u,v) \geq w(v, y)$.
\end{definition}

At this point it is important to make several observations. First, every edge set $E$ has at least one undominated edge, because its maximum weight edge must be undominated. However, there may be undominated edges which are not globally maximum. Second, for any edge set $E$ it is straightforward to find at least one undominated edge using only the partial ordering $\sigma$ (see \cite{anshelevich2016blind} for details). Undominated edges are either of the form $(u,v)$ where $u$ and $v$ are each other's most preferred neighbor, or form cycles in which each subsequent node is the first choice of the previous one, and thus all edges in the cycle have the same weight.

What follows is a general purpose Ordinal Greedy algorithm, which starts from the empty set and iteratively selects undominated edges from the set of remaining edges which do not violate the constraints in question. The algorithm uses the partial ordering $\sigma$ to determine which edges are undominated at each iteration. The algorithm concludes when there are no edges left which can be added to the subgraph without violating the constraints, so the final solution $S$ is maximal in this sense.

\begin{algorithm}[htb]
\caption{Ordinal Greedy}
\label{alg:ordinal_greedy}
  \textbf{Input:} Edge set $\mathcal{E}$, partial ordering $\sigma$, collection of valid subgraphs $\mathcal{L}$\\
  Initialize $S =  \emptyset $, $E = \mathcal{E}$ \;
   \While {$E \ne \emptyset$} {
     Pick an undominated edge $e = (u,v) \in E$ and add it to the intermediate solution: $S \leftarrow S + e$ \;
     Remove $e$ from $E$ \;
     Remove all edges $f$ from $E$ such that $S + f \notin \mathcal{L}$ \;
    }
  \textbf{Output:} Return $S$
\end{algorithm}

We refer to the iteration at which an edge $e = (u,v)$ is removed from $E$ as the \textbf{critical iteration} of $e$. When the inputs to our algorithm $(\mathcal{E,L})$ characterize an ABC System, there are exactly four cases which may occur at the critical iteration of edge $e$:\\
1) $e$ is added to the ordinal greedy solution $S$\\
2) $e$ is removed from $E$ because $S + e \notin \mathcal{A}$\\
3) $e$ is removed from $E$ because $S + e \notin \mathcal{B}$ (where $\mathcal{B}=$ sets of edges with any degree $\leq b$)\\
4) $e$ is removed from $E$ because $S + e \notin \mathcal{C}$ (where $\mathcal{C}=$ sets of edges with any component size $\leq c$)

For cases 2-4 we say $e$ was \textbf{eliminated due to $\mathcal{A, B,}$ or $\mathcal{C}$}. If an edge $e = (u,v)$ was eliminated due to $\mathcal{A}$, the attachment property implies there must be a $(u,v)$-path in the intermediate solution at its critical iteration. In other words, $u$ and $v$ are already in the same connected component in $S$ at this iteration. If $e = (u,v)$ was eliminated due to $\mathcal{B}$, either $u$ or $v$ must already have degree exactly $b$ at this iteration. If $e = (u,v)$ was eliminated due to $\mathcal{C}$, $u$ and $v$ must already be in disjoint connected components whose cumulative size is greater than $c$. Note that in these three cases, an edge can only be eliminated if at least one adjacent edge of equal or greater weight has already been added to the intermediate solution, and all adjacent edges already added to the intermediate solution must be of equal or greater weight (since only undominated edges are added to our solution).

There are limiting values of $\mathcal{A}$, $b$, and $c$ for which elimination due to these constraints cannot occur. When $c = n$, no edge can be eliminated due to $\mathcal{C}$ because all nodes can be in the same connected component. Additionally, when $\mathcal{A}$ is the set of all subgraphs of $G$, no edge can be eliminated due to $\mathcal{A}$. If adding an edge would violate more than one constraint, we say that it was eliminated in order of priority $\mathcal{C} \rightarrow \mathcal{B} \rightarrow \mathcal{A}$. For example, when $b = c - 1$, no edge can be eliminated due to $\mathcal{B}$, because for a node to reach degree $b$ the size of its component must be exactly $c$ and we say that any incident edge would be eliminated due to $\mathcal{C}$. As the following sections show, the approximation factor of Ordinal Greedy for an ABC System depends on which cases of elimination can occur.

Notice that the performance of the Ordinal Greedy algorithm can deviate significantly from the Omniscient Greedy algorithm in the full-information setting (which we call ``Omniscient Greedy" because it knows the underlying edge weights and can choose the edge with maximum weight at each iteration). Consider the following example.

\begin{example} \label{example:MST}
\em{
Suppose the graph $G = (\mathcal{N,E})$ is constructed as follows.
Let $\mathcal{N} = \{u_1, ... u_k, v_1, ...v_k\}$.
Let $w(u_i,v_i) = 1+\epsilon$ for $i \leq k$ for some infinitesimal $\epsilon$.
Let $w(u_i,u_j) = 1$ for all $i \neq j$. Let $w(v_i,v_j) = \epsilon$ for all $i \neq j$.
Let all other edges have weight 0. Consider the ABC System corresponding to finding a Maximum-Weight Spanning Tree. It is clear that Omniscient Greedy will find the optimum solution with weight $w(OPT_w) = k(1+\epsilon)+(k-1)$.

Now consider Ordinal Greedy. Suppose Ordinal Greedy begins by selecting $(u_i,v_i)$ for $i \leq k$, which are all undominated at the beginning of the algorithm.
Once these edges have been selected, edges of the form $(u_i,u_j)$ and $(v_i,v_j)$ become undominated for $i \neq j$. Now, if an edge $(u_i,u_j)$ or $(v_i,v_j)$ is selected, the other must be eliminated at that iteration, since taking it would form a cycle. Notice that since we only have access to ordinal information, there is {\em no possible way} for Ordinal Greedy to tell which of these edges is better: they are both edges which are most preferred by their endpoints, even though one secretly has weight $1$ and the other only $\epsilon$. In other words, these edges are incomparable in the partial preference order $\sigma$. Suppose Ordinal Greedy proceeds by selecting $(v_i,v_{i+1})$ for $i < k$. Then the Ordinal Greedy solution formed has weight $w(S) = k(1+\epsilon) + (k-1)\epsilon$. This example shows that (in the limit) it is not possible for Ordinal Greedy to always result in solutions better than a factor of 2 away from optimum, even though Omniscient Greedy can easily compute the true optimum solution. As we show in this paper, however, despite its knowledge handicap, Ordinal Greedy can often produce surprisingly good results.
}
\end{example}

\subsection{Properties of Ordinal Greedy} \label{sub:binary}
For any independence system in the full-information setting, the Omniscient Greedy algorithm has been shown to achieve its worst approximation on an instance with a \textbf{binary weight function} $\bar{w} : \mathcal{E} \rightarrow \{0,1\}$ \cite{korte1978analysis}. However, previous proofs have relied crucially on the fact that Omniscient Greedy selects edges in strictly non-increasing order by weight, which is not possible with only ordinal information. We offer a new proof to show that even in the ordinal setting, Ordinal Greedy always achieves its worst approximation factor on an instance with a binary weight function $\bar{w} : \mathcal{E} \rightarrow \{0,1\}$ for any graphic independence system. This theorem will allow us to prove approximation bounds for ABC and AB Systems later in this paper.

\begin{theorem} \label{thm:binary}
For any graphic independence system $(\mathcal{E,L})$, for any instance $(w, \sigma)$ with weight function $w: \mathcal{E} \rightarrow \mathbb{R}^+$ and partial ordering $\sigma$ consistent with $w$, there exists an instance $(\bar{w}, \sigma)$ with weight function $\bar{w}: \mathcal{E} \rightarrow \{0,1\}$ such that $\sigma$ is consistent with $\bar{w}$ and the worst-case ratio of the optimal solution to an Ordinal Greedy solution is at least as large as for $(w, \sigma)$.
\end{theorem}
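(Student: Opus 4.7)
The plan is a layer-cake / threshold argument. Fix an instance $(w, \sigma)$ and let $S^*$ be an Ordinal Greedy output on $(w,\sigma)$ that realizes the worst-case approximation ratio $\alpha = w(OPT_w)/w(S^*)$ (assume $w(S^*)>0$; otherwise the ratio is infinite and trivially matched by any binary instance on the same support). For each threshold $t>0$ define the binary weight function $\bar{w}_t(e)=1$ if $w(e)\ge t$ and $\bar{w}_t(e)=0$ otherwise. The goal is to show that some specific choice $\bar{w}:=\bar{w}_{t^*}$ yields a binary instance whose worst-case Ordinal Greedy ratio is at least $\alpha$.

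First I would verify two structural properties of the thresholded weights. (i) The preference profile $\sigma$ is consistent with $\bar{w}_t$: if agent $x$ prefers $y$ to $z$ in $\sigma$, then $w(x,y)\ge w(x,z)$, and thresholding preserves the weak inequality, so $\bar{w}_t(x,y)\ge \bar{w}_t(x,z)$. (ii) Every edge that is undominated in some remaining edge set $E$ under $w$ is still undominated under $\bar{w}_t$: if $w(u,v)\ge t$ then $\bar{w}_t(u,v)=1$ dominates any adjacent $\bar{w}_t$-value; and if $w(u,v)<t$ then $w(u,x)\le w(u,v)<t$ for every adjacent $(u,x)\in E$ (symmetrically for $v$), so $\bar{w}_t$ is $0$ on every adjacent edge. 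Since the independence collection $\mathcal{L}$ does not depend on the weights, (ii) implies that the entire Ordinal Greedy trajectory that produced $S^*$ on $(w,\sigma)$ is also a legal trajectory on $(\bar{w}_t,\sigma)$; thus $S^*$ is a reachable Ordinal Greedy output on $(\bar{w}_t,\sigma)$ for every $t$.

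The final step is an averaging argument. Let $0 = w_0 < w_1 < \cdots < w_k$ enumerate the distinct edge weights appearing in $w$. The layer-cake identity gives, for any edge set $X$,
\[
w(X) \;=\; \sum_{i=1}^{k}(w_i-w_{i-1})\,\bar{w}_{w_i}(X).
\]
Applying this to both $OPT_w$ and $S^*$ and invoking the mediant inequality (a ratio of sums of nonnegative reals with positive denominators is at most the maximum of the componentwise ratios) produces some index $i^*$ with $\bar{w}_{w_{i^*}}(S^*)>0$ and $\bar{w}_{w_{i^*}}(OPT_w)/\bar{w}_{w_{i^*}}(S^*)\ge \alpha$. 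Setting $\bar{w}:=\bar{w}_{w_{i^*}}$ and using $\bar{w}(OPT_{\bar{w}})\ge \bar{w}(OPT_w)$ by optimality of $OPT_{\bar{w}}$, the worst-case ratio for $(\bar{w},\sigma)$ is at least $\bar{w}(OPT_{\bar{w}})/\bar{w}(S^*)\ge \alpha$, which is the conclusion. The hard part is property (ii): it is what guarantees that the worst-case $w$-run can be replayed as a legal $\bar{w}$-run, and is the place where the monotonicity of the thresholding (together with $\sigma$-consistency of $w$) is used critically. Without (ii), even though $\sigma$ is automatically consistent with any $\bar{w}_t$, the specific tie-breaking decisions baked into $S^*$ might cease to be undominated choices, and the elegant ``same trajectory'' reduction would collapse.
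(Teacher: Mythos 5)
Your proof is correct, and it takes a genuinely different route from the paper's. You reduce to binary weights via threshold functions $\bar{w}_t$, the layer-cake identity $w(X)=\sum_i (w_i-w_{i-1})\bar{w}_{w_i}(X)$, and an averaging (mediant) step to extract a single level $t^*$ on which the ratio does not decrease; the observation that the greedy trajectory survives the reweighting is immediate (the paper records it as its Observation~\ref{ob:consistent1}: the algorithm's behavior depends only on $\sigma$, which is unchanged). The paper instead first disposes of the case where some edge outside $S$ is dominated by no edge of $S$ (unbounded ratio), then inflates every edge's weight up to the smallest dominating edge of $S$ to get an auxiliary function $\hat{w}$, assigns each $OPT$ edge to an edge $s_i$ of $S$, and proves by induction that some prefix satisfies $\sum_{i\le k} r_i \ge \delta k$, which defines the binary instance. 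Your argument is shorter, avoids the $\hat{w}$ construction and the prefix induction entirely, and is closer in spirit to the classical scaling reductions for independence systems; the paper's argument yields a binary instance tied to the structure of $S$ itself, but nothing downstream uses that extra structure, so your proof serves the theorem's purpose equally well. One small imprecision to patch: when some level has $\bar{w}_{w_i}(S^*)=0$ the mediant inequality as you state it does not apply; either $\bar{w}_{w_i}(OPT_w)=0$ too (drop that term) or $\bar{w}_{w_i}(OPT_w)>0$, in which case that level already gives an unbounded ratio and you are done, so the claimed $i^*$ with positive denominator need not exist but the conclusion still holds.
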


\begin{proofsketch} Before we begin the proof, we provide a short proof sketch.
Suppose on instance $(w, \sigma)$ the ratio between the optimal solution $OPT_w$ and solution $S$ constructed by Ordinal Greedy is $\frac{w(OPT_w)}{w(S)} = \delta$. Our goal is to construct a binary weight function $\bar{w}$ such that $\frac{\bar{w}(OPT_{\bar{w}})}{\bar{w}(S)} \geq \delta$. When $\delta$ is infinite, constructing $\bar{w}$ is straightforward, so we only consider finite values of $\delta$. First we create a weight function $\hat{w}$ by raising the weights of all edges not in $S$ as much as possible without altering the weights of the edges of $S$, such that $\sigma$ remains consistent with $\hat{w}$. Since Ordinal Greedy selects $S$ and none of its edge weights have changed, and the edge weights of $OPT_w$ cannot have decreased, then $\frac{\hat{w}(OPT_w)}{\hat{w}(S)} \geq \delta$. From $\hat{w}$ we carefully create $\bar{w}$ by proving that there must exist a subset of edges to which we can assign weight 1 and let all other edges have weight 0, such that $\sigma$ is consistent with $\bar{w}$ and $\frac{ \bar{w}( OPT_{\bar{w}} ) }{\bar{w}(S)} \geq \frac{\bar{w}(OPT_w)}{\bar{w}(S)} \geq \delta$.
\end{proofsketch}

\begin{proof}
Recall that a partial ordering $\sigma$ is {\em consistent} with weight function $w$ if for all $x,y,z \in \mathcal{N}$, if $x$ prefers $y$ to $z$ in $\sigma$ then $w(x,y) \geq w(x,z)$.
We will now show that for any instance $(w, \sigma)$ where $w: \mathcal{E} \rightarrow \mathbb{R}^+$ for which Ordinal Greedy provides a $\delta$-approximation for $\delta > 0$ in the worst case, there exists an instance $(\bar{w}, \sigma)$ where $\bar{w}: \mathcal{E} \rightarrow \{0,1\}$ for which Ordinal Greedy provides no better than a $\delta$-approximation in the worst case. Given any weight function $w$ we now construct a binary weight function $\bar{w}$ such that the approximation factor is at least as large and $\sigma$ is still consistent with $\bar{w}$.

\begin{observation} \label{ob:consistent1}
\emph{For an independence system $(\mathcal{E,L})$ the solution $S$ computed by Ordinal Greedy depends only on $\sigma$, not the edge weights. Therefore, if $\sigma$ is consistent with $w$ and $\hat{w}$, the possible solutions $S$ are the same for instances $(w, \sigma)$ and $(\hat{w},\sigma)$. However, $\hat{w}(S)$ and $w(S)$ may differ.}
\end{observation}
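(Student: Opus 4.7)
The plan is to prove Observation~1 by inspecting Algorithm~1 and verifying that every one of its operational steps can be executed from $\sigma$ and $\mathcal{L}$ alone, without consulting the underlying weights $w$. The loop body performs three operations: selecting an undominated edge $e \in E$, updating $S \leftarrow S + e$ and removing $e$ from $E$, and removing every $f$ with $S + f \notin \mathcal{L}$. The update and the constraint check are purely combinatorial in $S$ and $\mathcal{L}$, so the only step that is ostensibly weight-dependent is the selection of an undominated edge, since Definition~1 is stated in terms of $w$.

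To bridge this gap, I would consider the \emph{top-choice digraph} $D_\sigma(E)$ on the current edge set $E$, where $u \to v$ is a directed arc exactly when $v$ is $u$'s $\sigma$-top neighbor in $E$. Every non-isolated node has out-degree one, so $D_\sigma(E)$ always contains at least one directed cycle. I claim every undirected edge lying on such a cycle is undominated (per Definition~1) for every weight function $w$ consistent with $\sigma$. For a cycle of length two, this is immediate, because the two endpoints are each other's $\sigma$-top and consistency supplies the required weak inequalities. For a cycle $u_1 \to u_2 \to \cdots \to u_k \to u_1$ of length $k \geq 3$, chaining the consistency inequalities around the cycle forces all its edge weights to coincide, after which undominatedness of every cycle edge follows from each $u_i$'s top-choice property. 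Thus, at every iteration at least one undominated edge exists and can be identified from $\sigma$ alone; moreover, the full set of cycle edges of $D_\sigma(E)$ is a purely $\sigma$-determined family of valid picks.

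With the selection step re-cast as $\sigma$-driven, a straightforward induction on iteration count shows that both the intermediate solution $S$ and the remaining edge set $E$ are determined by $(\sigma, \mathcal{L})$ together with the sequence of non-deterministic picks, and that the set of available picks at each step is itself determined by $(\sigma, \mathcal{L}, S)$. Consequently, if $\sigma$ is consistent with both $w$ and $\hat{w}$, the two instances $(w, \sigma)$ and $(\hat{w}, \sigma)$ present Ordinal Greedy with identical information at every step, so the resulting sets of possible outputs coincide. The final remark that $w(S)$ and $\hat{w}(S)$ may differ is immediate, because nothing in the argument forces $w$ and $\hat{w}$ to agree on the edges of $S$. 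The main obstacle in the proof is the cycle-to-equality step for $k \geq 3$; everything else is routine inspection of the algorithm together with a one-line induction.
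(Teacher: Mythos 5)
Your proposal is correct and follows essentially the same reasoning the paper relies on: the Observation is stated without a separate proof precisely because Section~\ref{sub:algorithm} already notes that undominated edges are identified from $\sigma$ alone, being either mutual top choices or edges on top-choice cycles whose weights are forced equal by consistency, which is exactly your top-choice-digraph argument. Your write-up simply makes explicit the chaining of consistency inequalities around a cycle and the routine induction over iterations, so it matches the paper's intended justification.
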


Therefore, given a worst possible solution $S$ constructed by Ordinal Greedy for $(w,\sigma)$, our goal is to take the weight function $w$ and construct a binary weight function $\bar{w}$ such that $\frac{w(OPT_w)}{w(S)} \leq \frac{\bar{w}(OPT_{\bar{w}})}{\bar{w}(S)}$, and $\sigma$ is still consistent with $\bar{w}$ (and thus $S$ can still be produced by Ordinal Greedy for the instance with weights $\bar{w}$). Recall that $OPT_w$ is the optimum (maximum-weight) solution for weights $w$. Since by definition $\bar{w}(OPT_{\bar{w}}) \geq \bar{w}(OPT_w)$, it is enough to show that $\delta = \frac{w(OPT_w)}{w(S)} \leq \frac{\bar{w}(OPT_w)}{\bar{w}(S)}$.

Suppose there is some edge $e \notin S$, such that no edge in $S$ is known to be greater than or equal to it in the partial ordering $\sigma$. If such an edge exists, we can let $\bar{w}$ be the weight function such that $e$ and all edges known to be greater than or equal to $e$ in $\sigma$ have weight 1, and all other edges have weight 0. Clearly, $\sigma$ remains consistent with $\bar{w}$. To see this, consider any $x,y,z \in \mathcal{N}$ such that $x$ prefers $y$ to $z$. Then either both $w(x,y)$ and $w(x,z)$ dominate $e$ in $\sigma$, so $\bar{w}(x,y)=\bar{w}(x,z)=1$, or neither do, in which case $\bar{w}(x,y)=\bar{w}(x,z)=0$. The only other case is that $w(x,y)$ dominates $e$ and $w(x,z)$ does not (since we know that $w(x,y)$ dominates everything that $w(x,z)$ does), and then $1= \bar{w}(x,y)>\bar{w}(x,z)=0$. In all cases, $\bar{w}(x,y)\geq\bar{w}(x,z)$, so $\sigma$ is consistent with $\bar{w}$. This means $S$ remains the same set of edges, and since $\bar{w}(S)=0$, the approximation factor becomes unbounded. This means for any instance where there is some edge $e \notin S$ for which no edge in $S$ dominates it in $\sigma$, we can always create a weight function $\bar{w}$ with an approximation factor at least as large. Therefore, for the rest of the proof we assume that for every edge $e \notin S$ there exists some edge in $S$ known to be at least as great by the partial ordering $\sigma$.

For our greedy solution $S$, fix an ordering $\{s_1, s_2, ..., s_m\}$ over the edges of $S$ in non-increasing order by weight so that $w(s_1) \geq w(s_2) \geq ... \geq w(s_m)$. Construct weight function $\hat{w}$ by increasing the weight of each edge not in $S$ to be equal to the weight of the smallest-weight edge $s_i \in S$ known to be greater than or equal to it in the partial ordering $\sigma$. Note that by our assumption above, such an edge $s_i$ always exists.

\begin{claim}\label{claim:consistent}
$\sigma$ is consistent with $\hat{w}$.
\end{claim}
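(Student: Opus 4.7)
The plan is to prove consistency directly from the definition: fix any two adjacent edges $e_1 = (x,y)$ and $e_2 = (x,z)$ such that $x$ prefers $y$ to $z$ in $\sigma$ (so $e_1$ dominates $e_2$ in $\sigma$), and show $\hat{w}(e_1) \geq \hat{w}(e_2)$. I would proceed by a case analysis on whether each of $e_1, e_2$ lies in $S$, using the fact that for any edge $e \notin S$, the construction defines $\hat{w}(e) = \min\{w(s) : s \in S,\ s \text{ dominates } e \text{ in } \sigma\}$, which is well-defined by the reduction performed earlier in the proof (every edge outside $S$ is dominated by some edge of $S$).

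The two easy cases are when both edges lie in $S$ (consistency of $\sigma$ with $w$ gives the inequality directly) and when $e_1 \in S$ but $e_2 \notin S$ (then $e_1$ itself is a candidate in the minimum defining $\hat{w}(e_2)$, so $\hat{w}(e_2) \leq w(e_1) = \hat{w}(e_1)$). The case where both edges lie outside $S$ is also straightforward: since $e_1$ dominates $e_2$, transitivity of the partial order $\sigma$ implies that every $s \in S$ dominating $e_1$ also dominates $e_2$, so the minimum defining $\hat{w}(e_1)$ is taken over a subset of the candidates defining $\hat{w}(e_2)$, giving $\hat{w}(e_1) \geq \hat{w}(e_2)$.

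The main obstacle is the remaining case, where $e_2 \in S$ but $e_1 \notin S$. Here $\hat{w}(e_2) = w(e_2)$, and I need to show that every edge $s \in S$ dominating $e_1$ has weight at least $w(e_2)$. The key observation is again transitivity: if $s$ dominates $e_1$ in $\sigma$ and $e_1$ dominates $e_2$ in $\sigma$, then $s$ dominates $e_2$ in $\sigma$. Since both $s$ and $e_2$ are in $S$ and $\sigma$ is consistent with $w$, this forces $w(s) \geq w(e_2)$, and taking the minimum over all such $s$ yields $\hat{w}(e_1) \geq w(e_2) = \hat{w}(e_2)$. Once all four cases are handled, consistency of $\sigma$ with $\hat{w}$ follows, completing the claim.
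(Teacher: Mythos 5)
Your proof is correct and rests on the same key observation as the paper's: by transitivity of the domination relation in $\sigma$, every edge of $S$ dominating $(x,y)$ also dominates $(x,z)$, so the minimum defining $\hat{w}(x,z)$ ranges over a superset of candidates and can only be smaller. The paper phrases this as one uniform argument (an edge of $S$ is effectively assigned to itself, since its smallest dominating edge in $S$ has its own weight), whereas you make the four cases by membership in $S$ explicit; the mathematical content is identical.
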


\begin{proof}
Consider any two adjacent edges $(x,y)$ and $(x,z)$ where $x$ prefers $y$ to $z$. Let $(u,v)\in S$ be a smallest edge known by $\sigma$ to have weight at least $w(x,y)$. Then $(u,v)$ is also known to have weight at least $w(x,z)$ since $x$ prefers $y$ to $z$. Therefore, the smallest edge of $S$ known to have weight at least $w(x,z)$ is either $(u,v)$ or has weight smaller than $w(u,v)$. After increasing the weights, $\hat{w}(x,y) = w(u,v) \geq \hat{w}(x,z)$. Therefore if $x$ prefers $y$ to $z$ in $\sigma$, then $\hat{w}(x,y) \geq \hat{w}(x,z)$. By definition, $\sigma$ remains consistent with $\hat{w}$.
\end{proof}

The above process of forming $\hat{w}$ forms an assignment of edges: consider every edge $e$ to be assigned to the smallest edge in $s_i \in S$ known to be larger than or equal to it in the partial ordering, where $\hat{w}(e) = \hat{w}(s_i)$. Now alter these assignments so that if $\hat{w}(s_i) = \hat{w}(s_{i+1})$ then all edges with this weight, including $s_i$, are assigned to $s_{i+1}$. Let $r_i$ be the number of edges of $OPT_w$ assigned to $s_i$. Note that if $\hat{w}(s_i) = \hat{w}(s_{i+1})$ then $r_i = 0$.

\begin{lemma} \label{lemma:assignedweights}
$\sum \limits_{i = 1}^m r_i \cdot \hat{w}(s_i) \geq \delta \sum \limits_{i = 1}^m \hat{w}(s_i)$
\end{lemma}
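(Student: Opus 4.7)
My plan is to observe that the left-hand side equals $\hat{w}(OPT_w)$ and the right-hand side equals $\delta \cdot \hat{w}(S)$, and then derive the inequality from comparing $\hat{w}$ against the original $w$.

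First I would establish the two identities. Since $S = \{s_1, \dots, s_m\}$, we have $\hat{w}(S) = \sum_{i=1}^m \hat{w}(s_i)$ trivially. For $\hat{w}(OPT_w)$, I would use the assignment constructed in the paragraph above: every edge $e$ is assigned to some $s_j \in S$ with $\hat{w}(e) = \hat{w}(s_j)$ by the very definition of $\hat{w}$, and the tie-breaking reassignment (sending ties up to $s_{i+1}$) is harmless because it only moves edges between $s_i$ and $s_{i+1}$ when $\hat{w}(s_i)=\hat{w}(s_{i+1})$, so the value $\hat{w}(e)$ of each reassigned edge still matches the weight of its new target. Grouping the edges of $OPT_w$ by the $s_i$ they are assigned to then gives $\hat{w}(OPT_w) = \sum_{i=1}^m r_i \cdot \hat{w}(s_i)$.

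Next I would compare $\hat{w}$ with $w$ on the two sets. The construction of $\hat{w}$ leaves the weights of edges in $S$ untouched, so $\hat{w}(s_i) = w(s_i)$ for every $i$, and hence $\hat{w}(S) = w(S)$. For every other edge, $\hat{w}$ only raised the weight (or kept it the same), so $\hat{w}(OPT_w) \geq w(OPT_w)$. Combining these with the hypothesis $w(OPT_w) = \delta \cdot w(S)$ yields
\[
\sum_{i=1}^m r_i \cdot \hat{w}(s_i) \;=\; \hat{w}(OPT_w) \;\geq\; w(OPT_w) \;=\; \delta \cdot w(S) \;=\; \delta \sum_{i=1}^m \hat{w}(s_i),
\]
which is exactly the claimed inequality.

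The only subtlety, and thus the main thing to verify carefully, is that the assignment actually realizes $\hat{w}(OPT_w)$ as $\sum_i r_i \hat{w}(s_i)$: I need to check that every edge of $OPT_w$ is assigned to something (which follows from the earlier assumption that every edge outside $S$ is dominated by some edge of $S$ in $\sigma$, together with edges of $S$ being assigned to themselves or up the ordering) and that the weight of each assigned edge truly equals the weight of its target under $\hat{w}$. Both hold directly from how $\hat{w}$ was defined, so once the bookkeeping is written out the lemma follows with no further work.
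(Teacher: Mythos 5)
Your proposal is correct and follows essentially the same route as the paper: identify the left-hand side as $\hat{w}(OPT_w)$ via the assignment, the right-hand side as $\delta\cdot\hat{w}(S)$, and conclude from $\hat{w}(S)=w(S)$ (edges of $S$ unchanged) together with $\hat{w}(OPT_w)\geq w(OPT_w)$ (weights only increased). Your extra check that the tie-breaking reassignment preserves the identity $\hat{w}(e)=\hat{w}(s_i)$ is a valid and welcome piece of bookkeeping that the paper leaves implicit.
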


\begin{proof}
On the left side of the inequality, the product $r_i \cdot \hat{w}(s_i)$ denotes the total weight of the edges of $OPT_w$ assigned to $s_i$ after having their weight increased. This sum over $i \leq m$ computes $\hat{w}(OPT_w)$, the total weight of the optimal solution over $w$, evaluated by $\hat{w}$. On the right hand side, the summation yields the total weight of the edges in the greedy solution, $\hat{w}(S)$ multiplied by $\delta$. Since $\sigma$ is consistent with $\hat{w}$ by Claim \ref{claim:consistent} the greedy solution $S$ remains the same, and since none of the weights of edges of $S$ were altered $\hat{w}(S) = w(S)$. By construction, $\hat{w}(OPT_w) \geq w(OPT_w)$ because the weights of edges of $OPT_w$ could only have been increased. And so $\hat{w}(OPT_w) \geq \frac{w(OPT_w)}{w(S)} \hat{w}(S) = \delta \cdot \hat{w}(S)$.
\end{proof}

We now demonstrate that we can alter the weights of $\hat{w}$ to create a binary weight function $\bar{w}: \mathcal{E} \rightarrow \{0,1\}$ such that $\bar{w}(OPT_w) \geq \delta \cdot \bar{w}(S)$. All changes to the weights keep $\sigma$ consistent with $\bar{w}$, so that $S$ remains a solution of Ordinal Greedy.

\begin{lemma}
There exists some $k \leq m$ such that $\sum \limits_{i=1}^k r_i \geq \delta k$.
\end{lemma}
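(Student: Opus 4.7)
My plan is to prove this by contradiction, using Abel summation (summation by parts) to exploit the structural fact that the edges $s_1, s_2, \ldots, s_m$ of $S$ were ordered by non-increasing weight. Assume for contradiction that $\sum_{i=1}^k r_i < \delta k$ for every $k \leq m$. The key insight is that Lemma~\ref{lemma:assignedweights} is a weighted-average statement, and when the weights $\hat{w}(s_i)$ are monotonically non-increasing, a weighted-average lower bound on the $r_i$'s forces a prefix-average lower bound as well.

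In execution, I would let $R_k := \sum_{i=1}^k r_i$ and adopt the convention $\hat{w}(s_{m+1}) = 0$, then telescope $\hat{w}(s_i) = \sum_{k=i}^m (\hat{w}(s_k) - \hat{w}(s_{k+1}))$ to rewrite both sides of Lemma~\ref{lemma:assignedweights}:
\[ \sum_{i=1}^m r_i \hat{w}(s_i) = \sum_{i=1}^m R_i \bigl(\hat{w}(s_i) - \hat{w}(s_{i+1})\bigr), \qquad \sum_{i=1}^m \hat{w}(s_i) = \sum_{i=1}^m i \bigl(\hat{w}(s_i) - \hat{w}(s_{i+1})\bigr). \]
Subtracting $\delta$ times the second from the first, Lemma~\ref{lemma:assignedweights} becomes
\[ \sum_{i=1}^m (R_i - \delta i)\bigl(\hat{w}(s_i) - \hat{w}(s_{i+1})\bigr) \;\geq\; 0. \]
Under the contradiction hypothesis every factor $R_i - \delta i$ is strictly negative, and because the $s_i$'s are in non-increasing order every factor $\hat{w}(s_i) - \hat{w}(s_{i+1})$ is non-negative. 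Thus each summand is $\leq 0$, forcing the entire sum to equal $0$ and hence every difference $\hat{w}(s_i) - \hat{w}(s_{i+1})$ to be $0$. Combined with $\hat{w}(s_{m+1}) = 0$, this forces $\hat{w}(s_i) = 0$ for all $i$, i.e., $\hat{w}(S) = 0$.

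The one subtle point — and the only real obstacle — is excluding this degenerate case. I would resolve it by recalling that $\hat{w}$ was obtained from $w$ by only raising weights of edges outside $S$, so $\hat{w}(s_i) = w(s_i)$ for all $i$, hence $\hat{w}(S) = w(S)$. Since the proof sketch has already reduced to the case of finite $\delta = w(OPT_w)/w(S)$, we have $w(S) > 0$ and therefore $\hat{w}(S) > 0$. This contradicts $\hat{w}(S) = 0$ and completes the argument, so some $k \leq m$ must satisfy $\sum_{i=1}^k r_i \geq \delta k$.
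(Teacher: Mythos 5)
Your proof is correct, and it reaches the same contradiction with Lemma \ref{lemma:assignedweights} as the paper does, but by a recognizably different route. The paper proves, by induction on $j$, the partial-sum inequality
\[
\sum_{i=1}^{j} r_i \hat{w}(s_i) \;\leq\; \delta\sum_{i=1}^{j-1}\hat{w}(s_i) + \Bigl[\sum_{i=1}^{j} r_i - \delta(j-1)\Bigr]\hat{w}(s_j),
\]
shifting an amount $\xi = \delta j - \sum_{i=1}^j r_i > 0$ of coefficient from $\hat{w}(s_{j+1})$ to $\hat{w}(s_j)$ at each step; this is in effect a step-by-step, unrolled version of the Abel summation you perform in closed form. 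Your version is shorter and makes the role of monotonicity transparent: the single identity $\sum_i (R_i - \delta i)\bigl(\hat{w}(s_i)-\hat{w}(s_{i+1})\bigr) \geq 0$ with all factors of opposite signs does all the work. You also handle the degenerate case more uniformly: the paper must remark that $\hat{w}(s_m) > 0$ ``without loss of generality'' and otherwise rerun the argument at the largest $j$ with $\hat{w}(s_j) > 0$, whereas your argument absorbs that case by deriving $\hat{w}(S)=0$ and contradicting the finiteness of $\delta$, which is exactly the reduction already made in the theorem's preamble (and is legitimate, since $\hat{w}$ agrees with $w$ on $S$). The only thing the paper's inductive formulation buys is that it avoids introducing the convention $\hat{w}(s_{m+1})=0$ and the reindexing step, which you verified correctly; both proofs ultimately rest on the same two facts, namely Lemma \ref{lemma:assignedweights} and $\hat{w}(s_1)\geq\cdots\geq\hat{w}(s_m)\geq 0$.
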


\begin{proof}
Suppose to the contrary that $\sum \limits_{i=1}^k r_i < \delta k$  for all $k \leq m$. We show by induction that this implies $\sum \limits_{i = 1}^m r_i \hat{w}(s_i) < \delta \sum \limits_{i = 1}^m \hat{w}(s_i)$, which yields a contradiction to Lemma \ref{lemma:assignedweights}. Specifically, we will show that for every $j\leq m$,

\begin{equation}\label{ineq.1}
\sum \limits_{i=1}^j r_i \hat{w}(s_i) \leq \delta\sum_{i=1}^{j-1} \hat{w}(s_i) + [\sum \limits_{i=1}^j r_i - \delta(j-1)] \hat{w}(s_j).
\end{equation}

When applied to $j=m$, this gives us the result that $\sum \limits_{i = 1}^m r_i \hat{w}(s_i) \leq \delta\sum_{i=1}^{m-1} \hat{w}(s_i) + [\sum \limits_{i=1}^m r_i - \delta(m-1)] \hat{w}(s_m)$. Since $\sum \limits_{i=1}^m r_i < \delta m$, then $\sum \limits_{i=1}^m r_i - \delta(m-1) < \delta$, and thus the right hand side of the above inequality is strictly less than $\delta\sum_{i=1}^{m} \hat{w}(s_i)$, which gives us a contradiction with Lemma \ref{lemma:assignedweights}, as desired. Note that here we use the fact that $\hat{w}(s_m)>0$ without loss of generality; if this were not the case then we can make the same argument for $j$ being the largest integer such that $\hat{w}(s_j)>0$. Thus all that is left is to prove Inequality (\ref{ineq.1}).

We proceed by induction. The base case for $j=1$ is trivially true. Now assume that Inequality (\ref{ineq.1}) holds for $j$, and we will prove it for $j+1$. Then,

$$\sum \limits_{i=1}^{j+1} r_i \hat{w}(s_i) \leq \delta\sum_{i=1}^{j-1} \hat{w}(s_i) + [\sum \limits_{i=1}^j r_i - \delta(j-1)] \hat{w}(s_j) + r_{j+1}\hat{w}(s_{j+1}). $$

Let $\xi = \delta j - \sum \limits_{i=1}^j r_i$. Since by our assumption $\sum \limits_{i=1}^j r_i < \delta j$, we know that $\xi > 0$. Suppose in the right-hand side of the above inequality, we increase the coefficient of $\hat{w}(s_j)$ by $\xi$, and decrease the coefficient of $\hat{w}(s_{j+1})$ by $\xi$. Since $\hat{w}(s_j)\geq \hat{w}(s_{j+1})$, this only makes the quantity larger. Thus, we obtain that:

$$\sum \limits_{i=1}^{j+1} r_i \hat{w}(s_i) \leq \delta\sum_{i=1}^{j} \hat{w}(s_i) + [\sum \limits_{i=1}^{j+1} r_i - \delta j]\hat{w}(s_{j+1}), $$
as desired. This proves Inequality (\ref{ineq.1}) for every $j\leq m$, and thus completes the proof of this Lemma.
\end{proof}

We now use this value of $k$ to construct the binary weight function $\bar{w}$. Take the smallest $k$ such that $\sum \limits_{i=1}^k r_i \geq \delta k$ and let $\bar{w}$ be the weight function where for all $i \leq k$, $\bar{w}(s_i) = 1$, $\bar{w}(e) = 1$ for all $e$ assigned to $s_i$, and all other edges have weight 0. Note that $r_k>0$ since if $r_k = 0$ and $\sum \limits_{i=1}^k r_i \geq \delta k$, then $\sum \limits_{i=1}^{k-1} r_i \geq \delta (k-1)$.

We now argue that $\sigma$ is still consistent with $\bar{w}$. Consider any adjacent edges $(x,y),(x,z)$ such that $x$ prefers $y$ to $z$, and suppose that $(x,y)$ is assigned to edge $s_i$, while $(x,z)$ is assigned to edge $s_j$. First consider the case when $w(s_i)\neq w(s_j)$. It must be that $i < j$, since the set of edges dominating $(x,y)$ is a subset of edges dominating $(x,z)$, and edges are assigned to a smallest edge of $S$ dominating them. Then $\bar{w}(x,y) \geq \bar{w}(x,z)$, as desired. If instead $w(s_i)= w(s_j)$, then by construction of our assignment we have that $i=j$, so $\bar{w}(x,y) = \bar{w}(x,z)$. Therefore $\sigma$ is consistent with $\bar{w}$ because for any adjacent edges $(x,y),(x,z)$ such that $x$ prefers $y$ to $z$ it must be that $\bar{w}(x,y) \geq \bar{w}(x,z)$.

Thus we now have a binary weight function $\bar{w}$ such that $S$ is a possible solution of Ordinal Greedy, since $\sigma$ is consistent with $\bar{w}$. By definition of $\bar{w}$, we know that $\bar{w}(OPT_w) = \sum_{i=1}^k r_i$, and $\bar{w}(S) = k$. Due to our choice of $k$, we thus have that $\bar{w}(OPT_w) \geq \delta \cdot \bar{w}(S)$.

This concludes our proof that for any instance $(w, \sigma)$ where $w: \mathcal{E} \rightarrow \mathbb{R}^+$ for which Ordinal Greedy provides a $\delta$-approximation for $\delta > 0$ in the worst case, there exists an instance $(\bar{w}, \sigma)$ where $\bar{w}: \mathcal{E} \rightarrow \{0,1\}$ for which Ordinal Greedy provides no better than a $\delta$-approximation in the worst case.
\end{proof}

Another nice property is that every solution constructed by Ordinal Greedy is pairwise stable. Pairwise stability means that no pair of agents $x,y$ has incentive to collude to add edge $(x,y)$ by each giving up some of their edges in the Ordinal Greedy solution $S$. Either this exchange would decrease the total utility of one of the agents, or adding $(x,y)$ is infeasible even after sacrificing the other edges. Here we assume that the utility of a node $x$ in solution $S$ is simply the total weight of edges in $S$ incident on $x$.

\begin{theorem} \label{thm:stability}
Any solution $S$ constructed by Ordinal Greedy on an independence system is pairwise stable.
\end{theorem}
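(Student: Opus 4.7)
The plan is to proceed by contradiction: assume $S$ is not pairwise stable, so there exist agents $x,y$ with $(x,y) \notin S$ and edge sets $E_x, E_y \subseteq S$ (incident to $x$ and $y$ respectively) such that $T := (S \setminus (E_x \cup E_y)) + (x,y) \in \mathcal{L}$, and both $x$ and $y$ strictly increase their total incident weight in going from $S$ to $T$. Since $(x,y) \notin S$ but $(x,y) \in \mathcal{E}$, Ordinal Greedy must have deleted $(x,y)$ from its candidate pool $E$ at some iteration; call this the \emph{critical iteration} $t^{\ast}$ of $(x,y)$.

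The heart of the argument is to inspect the intermediate solution $S_{t^{\ast}}$ at the end of that iteration. Because $(x,y)$ was deleted (rather than added), it was eliminated for infeasibility, so $S_{t^{\ast}} + (x,y) \notin \mathcal{L}$. On the other hand, $S_{t^{\ast}} \subseteq S$ gives $(S_{t^{\ast}} \setminus (E_x \cup E_y)) + (x,y) \subseteq T \in \mathcal{L}$, and heredity promotes this containment to membership in $\mathcal{L}$. Putting these two independence facts side by side forces $(E_x \cup E_y) \cap S_{t^{\ast}}$ to be nonempty, so we can fix an edge $e$ in this intersection; by symmetry assume $e = (x,z) \in E_x$.

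To close the contradiction, observe that $e$ was inserted by Ordinal Greedy at some iteration $t \leq t^{\ast}$, at which moment $(x,y)$ was still present in $E$ (since it is removed only at $t^{\ast}$). The undominatedness of $e$ at iteration $t$ therefore yields $w(e) \geq w(x,y)$, and since each agent reports a \emph{strict} preference order and $e \neq (x,y)$, we in fact have $w(e) > w(x,y)$. Hence $x$'s total incident weight changes by at most $w(x,y) - w(e) < 0$ in the swap from $S$ to $T$, contradicting the assumption that $x$ strictly benefited. The main obstacle is really just the bookkeeping around the critical iteration: one has to be careful that $S_{t^{\ast}} \subseteq S$ so that heredity is applied in the correct direction, and that the extracted witness $e$ sits on one of the two endpoints $x,y$, so that a strict preference over $(x,y)$ sabotages exactly one side of the proposed blocking swap.
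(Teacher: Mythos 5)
Your argument is essentially the paper's proof run in the contrapositive direction: the paper derives from strict improvement that every sacrificed edge is lighter than $(x,y)$, hence was added after $(x,y)$'s critical iteration, hence the swap is infeasible by heredity; you instead assume feasibility, use heredity to extract a witness edge $e \in (E_x \cup E_y) \cap S_{t^\ast}$, and use undominatedness of $e$ to kill the strict improvement. The only misstep is the claim that strict preference orders force $w(e) > w(x,y)$ --- consistency only gives $w(x,y) \geq w(x,z)$ under strict preferences, so ties in weight are possible and you only get $w(e) \geq w(x,y)$; but this weak inequality (together with non-negativity of weights) already yields $w(x,y) - w(E_x) \leq 0$, contradicting strict improvement, so the proof stands.
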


\begin{proof}
Let $(x,y) \notin S$ and let $F_x$ and $F_y$ be any set of edges in $S$ adjacent to $x$ and $y$ excluding $(x,y)$. If $x$ and $y$ can improve their individual utilities by adding $(x,y)$ at the expense of removing all of the edges in $F_x \cup F_y$, this means $w(x,y) > w(F_x)$ and $w(x,y) > w(F_y)$. However, this clearly implies $w(x,y)$ is larger than the weight of each individual edge in $F_x$ and $F_y$. If $(x,y) \notin S$, then its critical iteration must have occurred before any of the adjacent edges in $F_x$ and $F_y$ were added to the Ordinal Greedy solution. Therefore $S - F_x - F_y + (x,y)$ cannot be a feasible solution.
\end{proof}
%_______________________________%

%_______________________________%
\section{Ordinal Approximation for ABC Systems} \label{abc}
In this section we bound the worst-case performance of Ordinal Greedy compared to the optimal solution for any ABC System. We use $\alpha_{_{ABC}}$ to denote the approximation factor, or the ratio of the optimal solution to the worst possible Ordinal Greedy solution for any ABC System.

Unlike Example \ref{example:MST} in Section \ref{sub:algorithm} for the maximum spanning tree problem, Ordinal Greedy does not provide a constant approximation factor for all ABC Systems. However, it does always provide a finite approximation which depends on the degree limit $b$. To simplify notation, since the optimal solution here is only evaluated using the same weight function used to generate it, we refer to the total weight of the optimal solution $w(OPT_w)$ as $w(OPT)$. Here we show that $\alpha_{_{ABC}} \leq b+1$ for any ABC System and provide a family of examples where $\frac{w(OPT)}{w(S)} = b+1$ to show that $b+1$ is a tight bound on the approximation factor. In later sections, we explore classes of ABC Systems in which Ordinal Greedy achieves a better worst-case approximation.

Note that this result is quite general. As we discussed, ABC Systems include many varied constraints, some quite difficult to approximate. Our result in this section states that, even for extremely complex $\mathcal{A}$ and constraints on component size $c$, as long as the maximum allowed degree of any node is small, then it is possible to form a good approximation to the true optimum solution while only knowing ordinal information instead of the true edge weights.

\begin{theorem} \label{thm:ABC}
For any ABC System, the Ordinal Greedy algorithm always produces a solution within a factor of (b+1) of the optimal solution, and this bound is tight.
\end{theorem}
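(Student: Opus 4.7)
The plan has three main steps: reduce to binary weights via Theorem~\ref{thm:binary}, execute a charging argument from $OPT$ to $S$, and exhibit a tight instance.

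For the reduction, Theorem~\ref{thm:binary} lets me assume without loss of generality that $w : \mathcal{E} \to \{0,1\}$. Writing $O^* = \{e \in OPT : w(e) = 1\}$ and $S^* = \{e \in S : w(e) = 1\}$, the inequality $w(OPT) \le (b+1) w(S)$ becomes the combinatorial statement $|O^*| \le (b+1)|S^*|$.

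For the charging, I assign every $e \in O^*$ to a \emph{witness} $\phi(e) \in S^*$. If $e \in S$, set $\phi(e) := e$. Otherwise $e$ was eliminated at its critical iteration, and by the observation following Algorithm~\ref{alg:ordinal_greedy} at least one edge in $S$ adjacent to $e$ has weight at least $w(e) = 1$ (and hence weight exactly $1$); such an $S^*$-edge is an eligible witness. I refine the choice based on the reason for elimination: for $\mathcal{B}$-elimination, charge toward the endpoint whose degree reached $b$; for $\mathcal{A}$-elimination, charge toward the first edge of the $(u,v)$-path in $S$ at the critical iteration; for $\mathcal{C}$-elimination, charge toward whichever of $u,v$ lies in a component of size at least $2$ in $S$.

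The technical core is the load bound: with a careful witness rule, every $s \in S^*$ is the witness of at most $b+1$ edges of $O^*$. For a fixed $s = (x,y) \in S^*$, the only $O^*$-edges that can be charged to $s$ are $s$ itself and the $O^*$-edges incident to $x$ or $y$; since each vertex has degree at most $b$ in $OPT$, the naive count is $2b$. To bring this down to $b+1$, I exploit the flexibility in witness choice together with the structural fact that when $x$ is $\mathcal{B}$-saturated by $S$ at the relevant iteration, the $b$ incoming $O^*$-charges at $x$ can be distributed across all $b$ saturating $S^*$-edges at $x$ (not concentrated on $s$), leaving only the charge from $s$ itself (if $s \in OPT$) plus at most one extra charge routed into $s$ from the $y$-side, which can be formalized through a Hall / LP-duality argument on the bipartite graph of eligible witnesses. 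Summing the bound over $S^*$ yields $|O^*| \le (b+1)|S^*|$, as desired.

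For tightness, I construct a small instance where the ratio $b+1$ is attained: a distinguished edge $s = (x,y)$ is the top choice of both endpoints and thus picked first by Ordinal Greedy; $b$ further weight-$1$ edges incident to $x$ become $\mathcal{B}$-blocked once $s$ is chosen; and one more weight-$1$ edge is blocked by an appropriately chosen $\mathcal{A}$ or $\mathcal{C}$ constraint. Ordinal preferences are set so that no other edges are comparable early on, forcing $S = \{s\}$ while $OPT$ consists of the $b+1$ blocked weight-$1$ edges. The main obstacle is the load-bound step: the adjacency count alone yields only $2b$, and tightening to $b+1$ requires the careful redistribution of charges sketched above, using both the structure of $\mathcal{B}$-saturation and the flexibility in choosing witnesses for $\mathcal{A}$/$\mathcal{C}$-eliminated edges.
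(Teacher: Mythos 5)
Your opening moves (reduction to binary weights via Theorem~\ref{thm:binary}, then a charging argument from $OPT$ to $S$) match the paper, but the step you defer --- the load bound --- is exactly where all the difficulty of the theorem lives, and your sketch does not establish it. An integral, adjacency-constrained witness assignment with load $b+1$ per edge of $S^*$ need not exist. Consider a component $P$ of the weight-one subgraph of $S$ with $p \le b$ nodes in which some node $u_0$ has $b$ weight-one $OPT$-edges to vertices that are isolated in $S$ at the relevant critical iterations and are eliminated due to $\mathcal{C}$: their only eligible witnesses are weight-one $S$-edges inside $P$, and together with the up-to-$b$ $OPT$-edges at each of the other $p-1$ nodes, the demand routed into $P$ approaches $bp$, which exceeds $(b+1)(p-1)$ precisely when $p \le b$. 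Your rule for $\mathcal{C}$-eliminated edges (charge to an endpoint whose component has size at least $2$) sends all of this charge into $P$, so the Hall condition you would need fails on the set of edges of $P$. The paper escapes this by charging to \emph{components} of the weight-one subgraph rather than to edges, by splitting the charge of each $\mathcal{C}$-eliminated edge \emph{fractionally} between the two sides in proportion to $q_u - 1$ and $q_v - 1$ (the component sizes in $S$ at the critical iteration), and by combining $q_{u_0} \le p$, $q_{u_0} + q_v > c$, and $b \le c-1$ to cap the total $\mathcal{C}$-charge at $u_0$ by $p-1$; it also needs the structural facts that an $\mathcal{A}$-eliminated edge at $u_0$ must have both endpoints in $P$ and that a $\mathcal{B}$-eliminated weight-one edge at $u_0$ cannot be charged to $P$ when $p \le b$. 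None of this is recoverable from ``Hall / LP-duality on the bipartite graph of eligible witnesses'' without first proving the corresponding deficiency inequality, which is essentially the theorem itself.

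Your tightness construction also does not work for $b \ge 2$: after Ordinal Greedy picks the single edge $s = (x,y)$, the node $x$ has degree one, so no edge at $x$ is eliminated due to $\mathcal{B}$; to saturate $x$ the algorithm must add $b$ edges at $x$, and each such edge, being added while weight-one edges at $x$ are still available, must itself be undominated and hence of weight one, which destroys the ratio. The paper's tight family instead takes $b = c-1$ (so $\mathcal{B}$-elimination never occurs) and a weight-$(1+\epsilon)$ path on $c$ nodes whose formation blocks, via the component-size constraint $\mathcal{C}$, $c$ disjoint weight-one stars of $b$ edges each, giving ratio $bc/(c-1) = b+1$.
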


\begin{proofsketch} Before we begin the detailed proof, we provide a short proof sketch.
We proceed via a charging argument. We wish to charge the weight of the edges of $OPT$ to the edges of $S$ such that all edges of $OPT$ are fully charged somewhere, and no edge of $S$ receives a charge greater than $b+1$ times its weight. However, unlike Omniscient Greedy in the full-information setting, we cannot assume that any eliminated edge of $OPT$ has weight smaller than all edges of the Ordinal Greedy solution $S$ which were added before its critical iteration. This prohibits us from using the methods in previous work. Thankfully, due to Theorem \ref{thm:binary} we know that if Ordinal Greedy produces a solution within a factor of $(b+1)$ of optimal for all instances with binary weight functions, then this holds for all instances. We therefore assume that all weights are $\{0,1\}$, and can now charge any edge of $OPT$ to any weight 1 edge of $S$, but must ensure that no weight is charged to any edges of $S$ with weight 0.

To ensure no edge of $S$ is charged more than $b+1$ times its weight, we look at the connected components of $S$ with only weight 1 edges, and charge all edges of $OPT$ to these components. Specifically, we design the following charging scheme. Let $(u,v)$ be an edge of $OPT$ where $w(u,v) = 1$. Let $P_u$ and $P_v$ be the connected components containing $u$ and $v$ in the subgraph of $S$ containing only weight 1 edges. We charge the weight of $w(u,v)$ between $P_u$ and $P_v$ based on what occurs at the critical iteration of $(u,v)$. If $(u,v) \in S$ then $P_u = P_v$, so charge its full weight to this component. If $(u,v)$ was eliminated due to $\mathcal{A}$, charge its full weight to either $P_u$ or $P_v$ arbitrarily. Note that while the attachment property of $\mathcal{A}$ ensures that $u$ and $v$ are in the same connected component in $S$ at this iteration, this does \emph{not} imply that $P_u=P_v$ because all $(u,v)$-paths in $S$ may contain a weight 0 edge. If $(u,v)$ was eliminated due to $\mathcal{B}$, one of its endpoints must have a degree of exactly $b$ in $S$ at this iteration, so charge its full weight to the component containing this endpoint. If $(u,v)$ was eliminated due to $\mathcal{C}$, we split the charge between $P_u$ and $P_v$ based on the size of the connected components in $S$ containing $u$ and $v$ at this iteration. Let $q_u$ and $q_v$ be the sizes of the connected components in $S$ containing $u$ and $v$ at the critical iteration of $(u,v)$. Charge $\frac{q_u - 1}{q_u + q_v - 2}$ to $P_u$ and $\frac{q_v - 1}{q_u + q_v - 2}$ to $P_v$. In all four cases we have ensured the full weight of the edge of $OPT$ has been charged between the components containing its endpoints.

The rest of the proof involves arguing that each such component $P_u$ of size $p$ is charged a total of at most $(b+1)(p-1)$ using the above charging scheme. Since such a component must contain at least $p-1$ edges with weight 1, this completes the proof of the upper bound because it shows that the total number of edges in $OPT$ with weight 1 is at most $b+1$ times the number of such edges in $S$.

We then provide a family of examples to show this bound is tight. Omniscient Greedy has the same worst-case solution as Ordinal Greedy on this family of examples, so our ordinal approximation competes well despite its knowledge handicap. Note that in the example yielding the lower bound of $\alpha_{_{ABC}} \geq b+1$, all edges are eliminated due to $\mathcal{C}$. The following section demonstrates that when the component size constraint is relaxed by allowing $c = n$, the approximation factor improves significantly.
\end{proofsketch}

\begin{proof}
First, we prove the upper bound $\alpha_{ABC} \leq b+1$. Since all ABC Systems are independence systems, by Theorem \ref{thm:binary} it is sufficient to show that $\alpha \leq b+1$ for all instances with integral weight functions $w : \mathcal{E} \rightarrow \{0,1\}$ to provide an upper bound on $\alpha$ for all instances.

We proceed via a charging argument. Let $w$ be a binary weight function $w: \mathcal{E} \rightarrow \{0,1\}$, so all edges in our graph are either weight 1 or weight 0. Given the Ordinal Greedy solution $S$ constructed for some instance $(w,\sigma)$ we consider the subgraph $\bar{S} \subseteq S$ with only weight 1 edges and partition $\bar{S}$ into connected components. We then construct a charging scheme which charges the full weight of all the edges in $OPT$ to these components. The total charge over all these components represents the total weight of $OPT$, while the total weight of these components is equal to the weight of the greedy solution. If our charging scheme guarantees that the full weight of all edges of $OPT$ has been charged to the components, and no component of $\bar{w}$ is charged more than $(b+1)$ times the total weight of its edges, then we can sum over these components to show that $w(OPT) \leq (b+1)w(\bar{S}) = (b+1)w(S)$.

Let $\mathcal{P}$ be the set of connected components in $\bar{S}$. Let $P \in \mathcal{P}$ be any connected component in this subgraph of size $p \geq 1$ and total weight $w(P)$. We make two obvious, but critical observations about each component $P$.

\begin{observation} \label{ob:ABC_componentweight}
\emph{For any component $P$, its total weight $w(P)$ is at least $p-1$, the weight of its maximum spanning tree.}
\end{observation}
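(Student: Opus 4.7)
The plan is to derive the bound directly from the elementary fact that any connected graph on $p$ vertices contains a spanning tree with exactly $p-1$ edges. First I would note that by construction $P$ is a connected component of $\bar{S}$, so $P$ is itself a connected subgraph on $p$ nodes. Second, I would invoke the basic graph-theoretic result that every connected graph on $p$ nodes contains a spanning subtree $T$ with $p-1$ edges; equivalently, any connected graph on $p$ nodes has at least $p-1$ edges, with equality when the graph is a tree.

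Third, I would combine this with the defining property of $\bar{S}$: since $\bar{S}$ is the subgraph of $S$ obtained by retaining only the edges of weight $1$, every edge lying in $P$ (and in particular every edge of the spanning tree $T \subseteq P$) has weight exactly $1$ under the binary weight function $w$. Hence $w(T) = p-1$, and since $T$ is a subgraph of $P$ we obtain $w(P) \geq w(T) = p-1$. The maximum spanning tree of $P$ likewise has weight exactly $p-1$ because all its edges are weight $1$, matching the parenthetical in the statement. The argument involves no real obstacle; the only thing to check carefully is that $P$ is genuinely connected (immediate from the definition of connected component) and that every edge inside $P$ has weight $1$ (immediate from the definition of $\bar{S}$).
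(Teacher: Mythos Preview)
Your proposal is correct and matches the paper's reasoning exactly; the paper in fact treats this observation as self-evident and offers no proof beyond labeling it ``obvious,'' and your argument supplies precisely the elementary justification (connectedness of $P$ gives at least $p-1$ edges, and every edge of $\bar{S}$ has weight $1$) that the authors leave implicit.
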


\begin{observation} \label{ob:ABC_path}
\emph{There is a path between any two nodes in $S$ with only weight 1 edges if and only if the nodes are in the same component $P$.}
\end{observation}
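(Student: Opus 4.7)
The statement to prove is essentially definitional: it merely unpacks what it means for $\mathcal{P}$ to be the set of connected components of $\bar{S}$, where $\bar{S}\subseteq S$ is the subgraph consisting of all weight-$1$ edges of $S$. So the plan is short: argue both directions of the ``if and only if'' by appealing directly to the definition of a connected component together with the construction of $\bar{S}$.

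For the forward direction, suppose there is a path between nodes $u$ and $v$ in $S$ that uses only weight-$1$ edges. Every edge on this path lies in $\bar{S}$ by construction, so the path is actually a $(u,v)$-path in $\bar{S}$. Therefore $u$ and $v$ lie in the same connected component of $\bar{S}$, i.e., the same $P\in\mathcal{P}$.

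For the reverse direction, suppose $u$ and $v$ lie in the same component $P\in\mathcal{P}$. By definition of connected component, there is a $(u,v)$-path entirely inside $P\subseteq\bar{S}$. Since every edge of $\bar{S}$ has weight $1$, this is a $(u,v)$-path in $S$ using only weight-$1$ edges.

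There is no real obstacle here; the only care needed is to make the logical equivalence explicit and to note that ``path in $\bar{S}$'' and ``path in $S$ using only weight-$1$ edges'' mean the same thing, since $\bar{S}$ is defined as exactly the weight-$1$ subgraph of $S$. The observation will then be used later in the charging argument to identify when two endpoints of an $OPT$-edge lie in the same component $P$ versus in different components.
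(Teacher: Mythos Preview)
Your proposal is correct and matches the paper's treatment: the paper states this as an ``obvious'' observation without proof, since it follows immediately from the definition of $\mathcal{P}$ as the connected components of the weight-1 subgraph $\bar{S}\subseteq S$. Your two-direction argument is exactly the definitional unpacking one would write if asked to justify it.
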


\begin{claim} \label{claim:ABC_charging}
\emph{There exists a charging scheme which charges the full weight of all edges of $OPT$ to the components of $\mathcal{P}$ such that the total charge to any component $P$ of size $p$ is at most $(b+1)(p-1)$.}
\end{claim}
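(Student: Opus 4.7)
The plan is to use the charging scheme from the proof sketch and show, component by component, that each connected component $P$ of $\bar S$ of size $p$ receives total charge at most $(b+1)(p-1)$.

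First I would record an \emph{undominated lemma}: at the critical iteration $t$ of any weight-$1$ OPT edge $e=(v,u)\notin S$, every $S$-edge then incident to $v$ or to $u$ has weight $1$. Indeed, whenever such an edge $(v,w)$ was added to $S$ it was undominated while $e$ still lay in $E$, so $w(v,w)\geq w(e)=1$. Three consequences follow immediately from the charging rules: (i) a Type-$\mathcal{A}$ charge sent to $P_v$ forces $v$ to have at least one weight-$1$ $S$-neighbor in $P_v$, so $|P_v|\geq 2$; (ii) a Type-$\mathcal{B}$ charge in which $v$ is the degree-$b$ endpoint forces $v$ to have $b$ weight-$1$ $S$-neighbors in $P_v$, so $|P_v|\geq b+1$; (iii) a positive Type-$\mathcal{C}$ charge to $P_v$ requires $q_v\geq 2$, hence $|P_v|\geq 2$.

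With this the case $p=1$ is immediate: none of the four charge types can target a singleton, so $X_P=0$. For $p\geq b+1$ a direct degree count suffices: writing $\eta$ and $\phi$ for the number of weight-$1$ OPT edges with both, respectively exactly one, endpoint in $P$, the OPT degree bound (each node has at most $b$ OPT edges) gives $2\eta+\phi\leq pb$, so $X_P\leq \eta+\phi\leq pb\leq (b+1)(p-1)$, the last inequality holding precisely when $p\geq b+1$. In the remaining hardest regime $2\leq p\leq b$, consequence (ii) rules out any Type-$\mathcal{B}$ charge from reaching $P$, so only Types $1$, $\mathcal{A}$, and $\mathcal{C}$ contribute. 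I would then route every remaining OPT charge directed at $P$ to a specific weight-$1$ $S$-edge of $P$ --- each Type-$1$ charge to itself, and each Type-$\mathcal{A}$ or Type-$\mathcal{C}$ charge ``at $v\in P$'' to one of $v$'s weight-$1$ $S$-edges whose existence is guaranteed by the undominated lemma --- and argue that the routing can be chosen so that each of the $p-1$ edges of a spanning tree of $P$ in $\bar S$ absorbs at most $b+1$ units of charge (the ``$b$'' coming from the at most $b$ OPT weight-$1$ edges incident to any endpoint of a tree edge, plus ``$+1$'' from the tree edge's potential Type-$1$ self-charge).

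The main obstacle is this small-$p$ bookkeeping: the Type-$\mathcal{A}$ rule offers a choice between $P_u$ and $P_v$, and a single choice affects two components at once. I expect the cleanest route is to cast the collection of Type-$\mathcal{A}$ choices as a small transportation problem across the components of $\bar S$, using the global balance $\sum_P X_P = w(OPT)$ and the tighter per-node attribution bounds available when $p\leq b$ to certify a feasible assignment. Combined with the observation that every Type-$\mathcal{C}$ edge has $P_u\neq P_v$ (otherwise the two $S_t$-components of $u$ and $v$ would eventually merge into a single final $S$-component of size $>c$, violating the $\mathcal{C}$ constraint), this should yield $X_P\leq (b+1)(p-1)$ throughout the small-$p$ range and complete the proof of the claim.
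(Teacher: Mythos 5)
Your preliminary observations are sound and match the paper's: the ``undominated lemma'' is correct, it correctly kills $\mathcal{B}$-charges for components with $p \leq b$ (this is exactly how the paper handles that sub-case), and the cases $p = 1$ and $p \geq b+1$ are handled the same way the paper handles $p > b$. But the case $2 \leq p \leq b$, which you correctly identify as the crux, is not proved, and the route you propose cannot close it. The arithmetic is against you: each of the $p$ nodes of $P$ may be the target of up to $b$ units of charge, for a naive total of up to $pb$, and $pb > (b+1)(p-1)$ precisely when $p \leq b$. So no redistribution of the per-node charges over the $p-1$ spanning-tree edges of $P$ (and no clever resolution of the $\mathcal{A}$-choices via a transportation argument) can bring the per-edge load down to $b+1$; you must first show that the \emph{total} charge to $P$ is strictly less than $pb$, by at least $b+1-p$. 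Your observation that $P_u \neq P_v$ for $\mathcal{C}$-eliminated edges does not supply this saving, because for a generic node $v\in P$ the fraction $\frac{q_v-1}{q_v+q_z-2}$ can be arbitrarily close to $1$ (the $S$-component of $v$ at the critical iteration may be large due to weight-$0$ edges).

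The missing idea is the paper's choice of a distinguished node $u_0$: the first node of $P$ to acquire an incident weight-$0$ edge in $S$. Before that happens, the entire $S$-component of $u_0$ lies inside $P$, so $q_{u_0} \leq p$. One then shows that any weight-$1$ OPT edge whose only endpoint in $P$ is $u_0$ cannot be in $S$, cannot be eliminated due to $\mathcal{A}$ (the attachment path would consist of weight-$1$ edges, putting the other endpoint in $P$), and cannot charge $P$ via $\mathcal{B}$ when $p\le b$; so it must be $\mathcal{C}$-eliminated, and the fractional rule together with $q_{u_0}\leq p$ and $q_{u_0}+q_v > c$ caps its charge to $P$ at $\frac{p-1}{c-1}$. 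Since there are at most $b \leq c-1$ such edges, they contribute at most $p-1$ in total, while the remaining $p-1$ nodes contribute at most $b$ each, giving $(b+1)(p-1)$. Your write-up never uses the actual values of the fractional $\mathcal{C}$-split, never relates $q_{u_0}$ to $p$ or $q_{u_0}+q_v$ to $c$, and never isolates such a node $u_0$; without these the bound in the regime $2\leq p\leq b$ does not follow.
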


\begin{proof}

We need only charge the edges of $OPT$ such that $w(u,v) = 1$ because the weight 0 edges do not contribute to the value of $w(OPT)$.

%Recall that at its critical iteration it could be included or eliminated due to A, B, C...\\

\noindent\textbf{Charging Scheme:}
Let $(u,v)$ be an edge of $OPT$ where $w(u,v) = 1$. Let $P_u$ and $P_v$ be the connected components containing $u$ and $v$ in the subgraph $\bar{S} \subseteq S$ containing only weight 1 edges. We charge the weight of $w(u,v)$ between $P_u$ and $P_v$ based on what occurs at the critical iteration of $(u,v)$. If $(u,v) \in S$ then $P_u = P_v$, so charge its full weight to this component. If $(u,v)$ was eliminated due to $\mathcal{A}$, charge its full weight to either $P_u$ or $P_v$ arbitrarily. Note that while the attachment property of $\mathcal{A}$ ensures that $u$ and $v$ are in the same connected component in $S$ at this iteration, this does \emph{not} imply that they are in the same connected component in $\mathcal{P}$ because all $(u,v)$-paths in $S$ may contain a weight 0 edge. If $(u,v)$ was eliminated due to $\mathcal{B}$, one of its endpoints must have a degree of exactly $b$ in $S$ at this iteration, so charge its full weight to the component containing this endpoint. If $(u,v)$ was eliminated due to $\mathcal{C}$, we split the charge between $P_u$ and $P_v$ based on the size of the connected components in $S$ containing $u$ and $v$ at this iteration. Let $q_u$ and $q_v$ be the sizes of the connected components in $S$ containing $u$ and $v$ at the critical iteration of $(u,v)$. Charge $\frac{q_u - 1}{q_u + q_v - 2}$ to $P_u$ and $\frac{q_v - 1}{q_u + q_v - 2}$ to $P_v$. In all four cases we have ensured the full weight of the edge of $OPT$ has been charged between the components containing its endpoints.

The weight of each edge $(u,v) \in OPT$ has been charged exclusively to the components containing its endpoints, $P_u$ and $P_v$. Therefore, to determine the maximum possible charge to any component $P$, we bound the charge from edges of $OPT$ with one or both endpoints in $P$ and show this is at most $(b+1)(p-1)$.

\textbf{Case 1)} $p > b$\\
All nodes have at most $b$ adjacent edges in $OPT$, so the maximum charge to any component $P$ with $p$ nodes is $b \cdot p$. If $p > b$ then the total charge on $P$ is at most $b \cdot p \leq b \cdot p + (p - (b+1)) = (b+1)(p-1)$.

\textbf{Case 2)} $p \leq b$\\
Let $u_0$ be the first node in $P$ to have an edge of weight 0 added adjacent to it in $S$ at some iteration of Ordinal Greedy. By definition, at any time before the critical iteration of this weight 0 edge $u_0$ cannot be in the same component in $S$ as any edge of weight 0. If there is no node in $P$ with an adjacent weight 0 edge in $S$, let $u_0$ be any arbitrary node in $P$. Let a \textbf{Type 2} edge be an edge of $OPT$ with weight 1 which is incident to $u_0$, but not to any other node in $P$. Let a \textbf{Type 1} edge be all edges of $OPT$ of weight 1 which are not Type 1, including all edges with both endpoints in $P$ and those with a single endpoint in $P$ which is not $u_0$.

Since all nodes have degree at most $b$ in $OPT$ and there are $(p-1)$ nodes other than $u_0$, it is clear that Type 1 edges cumulatively contribute a charge of at most $b \cdot (p-1)$ to $P$. Here we show that Type 2 edges contribute a total charge of at most $p - 1$, limiting the total charge to any component $P$ to at most $b \cdot (p - 1) + (p - 1) = (b+1)(p - 1)$.

Let $(u_0, v)$ be an edge of $OPT$ where $w(u_0,v) = 1$. The critical iteration of $(u_0,v)$ must be before the critical iteration of the first weight 0 edge incident to $u_0$ in $S$. This is because at the iteration the weight 0 edge was added to $S$ it had to be undominated, so $(u_0,v)$ could not still have been adjacent to it in the set of available edges $E$.

Let $P_v$ denote the component containing $v$. We look at the four cases of charging based on the critical iteration of $(u_0, v)$ to show that Type 2 edges contribute a charge of at most $(p-1)$ to $P$.

If $(u_0, v) \in S$, then clearly this is Type 1 because $P = P_v$ and so it has already been charged to $P$. Likewise, if $(u_0, v)$ was eliminated due to $\mathcal{A}$ then its endpoints must be in the same component $P = P_v$. This is because the attachment property ensures that $u$ and $v$ are in the same connected component in $S$ at this iteration, and by the definition of $u_0$ it cannot yet have a path to any edge of weight 0. Therefore any path from $u_0$ to $v$ in $S$ must contain only weight 1 edges, meaning $P = P_v$. If $(u_0, v)$ was eliminated due to $\mathcal{B}$ then its full weight is either charged to $P$ or to $P_v$. For its weight to be charged to $P$, $u_0$ must have degree $b$ at this iteration. For $(u_0, v)$ to have weight 1 this means all $b$ of the edges incident to $u_0$ in $S$ must have weight 1 because otherwise they could not have been undominated before the critical iteration of $(u_0, v)$. However, this would mean that all neighbors of $u_0$ are in $P$, so $p > b$. Therefore, if $p \leq b$ there can be no Type 2 edges charged to $P$ which were eliminated due to $\mathcal{B}$.

We can now see that the only Type 2 edges charged to $P$ are those which are eliminated due to $\mathcal{C}$. For each of these edges $P$ is charged $\frac{q_{u_0} - 1}{q_{u_0} + q_v - 2}$. For $(u_0,v)$ to be eliminated due to $\mathcal{C}$ this means the combined sizes of the disjoint components in $S$ containing $u_0$ and $v$ must be at least $c$, otherwise the edge $(u_0,v)$ would still be a valid edge to add. In other words, $q_{u_0} + q_v > c$ so $q_{u_0} + q_v \geq c-1$. And since no weight 0 edge may yet be adjacent to the component in $S$ containing $u_0$ at this iteration, we know that $q_{u_0} \leq p$. Therefore, $\frac{q_{u_0} - 1}{q_{u_0} + q_v - 2} \leq \frac{p - 1}{c - 1}$, so the charge from each Type 2 edge eliminated due to $\mathcal{C}$ is at most $\frac{p - 1}{c - 1}$. And since there can be at most $b$ Type 2 edges, the total charge they contribute to $P$ is at most $b \ \frac{p - 1}{c - 1} = (p-1)\frac{b}{c-1} \leq p-1$ because $b \leq c - 1$ (the maximum degree can never be more than the component size).

This leaves us with a total charge to $P$ of at most $b(p-1) + (p-1) = (b+1)(p-1)$ when $p \leq b$. Together with Case 1, we have shown that the total charge to each component $P \in \mathcal{P}$ of size $p$ at most $(b+1)(p-1)$. This concludes the proof of Claim \ref{claim:ABC_charging}.
\end{proof}

By summing the charge over all components $P \in \mathcal{P}$ we get $w(OPT) \leq (b+1) \sum\limits_{P \in \mathcal{P}}(p-1) \leq (b+1) \sum\limits_{P \in \mathcal{P}} w(P) = (b+1) w(S)$ from the above claim. In other words, $\alpha_{_{ABC}} = \max\limits_{w,\sigma} \frac{w(OPT)}{w(S)} \leq b+1$.

To show that the above result is tight, consider the following ABC System. This system represents the problem of hedonic coalition formation with additive separable symmetric preferences ($b = c-1$ and $\mathcal{A}$ = all subgraphs of $G$), where agents are partitioned into coalitions (cliques) and each agent's total utility is the sum of its utility for being matched with all other agents in its coalition.

\paragraph{Example 2}
Suppose that $n = c^2$, $b = c - 1$, and $\mathcal{A}$ = all subgraphs of G. In other words, the only constraint is that all components must be of size at most $c=\sqrt{n}$.
Label the nodes $u_{ij}$ for $i \in [1,c]$ and $j \in [1,c]$.
Let $w(u_{i1},u_{ij}) = 1$ for all $j > 1$.
Let $w(u_{i1}, u_{k1}) = 1+\epsilon$ for all $k = i + 1$ for some infinitesimal $\epsilon$.
Let all other edges have weight 0.

For appropriate choices of preferences $\sigma$, Ordinal Greedy may select each of the edges $(u_{i1}, u_{k1})$ for all $k = i+1$ before selecting any others, creating a path of length $c-1$. Thus, all of the weight 1 edges are eliminated due to $\mathcal{C}$. The optimal solution is to select each of the edges $(u_{i1},u_{ij}) = 1$ for all $j > 1$. This yields $w(S) = (c - 1)(1+ \epsilon)$ while $w(OPT) = b \cdot c$ because $OPT$ consists of $c$ stars with $b$ edges each. Therefore as $\epsilon \rightarrow 0$, $\frac{w(OPT)}{w(S)} = b \cdot \frac{c}{c-1} = b+1$ because $b = c - 1$. As $\alpha$ is an upper bound on the ratio between the optimal solution and the greedy solution for any instance, we have $\alpha \geq b+1$.
%\end{proof}

This concludes our proof of Theorem \ref{thm:ABC}.
\end{proof}

%_______________________________%
\section{AB Systems and Important Special Cases} \label{AB}
In this section, we bound the performance of Ordinal Greedy on ABC Systems where $c = n$, effectively removing the component size constraint. We then discuss some common examples of maximization problems on AB Systems, including Max Spanning Tree, Max TSP, and Max Planar Subgraph. To improve our bound from $b+1$ we invoke the notion of sparsity.

\begin{definition} \label{def:sparsity} Sparsity\\
\emph{A graph $S$ is $d$-sparse if for all subgraphs $F \subseteq S$ containing $V(F)$ nodes and $E(F)$ edges $\frac{E(F)}{V(F)} \leq d$ and for any $\hat{d} < d$ there exists a subgraph $\hat{F} \subseteq S$ such that $\frac{E(\hat{F})}{V(\hat{F})} > \hat{d}$.}
\end{definition}

Suppose our attachment set $\mathcal{A}$ and degree limit $b$ imply that any feasible solution must be $d$-sparse. Note that this sparsity is implied by our constraints, and is not a separate constraint. Our main result in this section is that, for any graph collection which is guaranteed to be $d$-sparse, ordinal information is enough to produce good approximations. Specifically, we prove a bound of $d+1$ for such settings. Since the sparsity corresponds to an upper bound on average degree of the nodes, it is always true that $d \leq \frac{b}{2}$, and so when $c=n$, this immediately reduces the approximation factor from $b+1$ to $\frac{b}{2}+1$. Even for large $b$, however, there are many natural classes of graphs that are always sparse, including planar graphs, scale-free graphs, graphs of small arboricity or treewidth, and many others. As we discuss in the next section, this result allows us to provide extremely strong guarantees for many important problems.

\begin{theorem} \label{thm:AB}
For any ABC System where the components can be of any size and the constraints imply that any feasible solution must be $d$-sparse, the Ordinal Greedy algorithm always produces a solution within a factor of $\max\{2, (d+1)\}$ of the optimal solution, and this bound is tight.
\end{theorem}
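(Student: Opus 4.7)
My plan mirrors the charging proof of Theorem \ref{thm:ABC}, adapted to exploit the two defining features of AB Systems: because $c=n$, no edge is ever eliminated due to $\mathcal{C}$, leaving only the simpler $\mathcal{A}$- and $\mathcal{B}$-eliminations; and the $d$-sparsity of any feasible solution puts a direct cap on how many $OPT$-edges can accumulate in any local region of the graph. By Theorem \ref{thm:binary} it suffices to prove the bound for binary instances $w:\mathcal{E}\to\{0,1\}$, so letting $T\subseteq S$ be the weight-$1$ edges of the Ordinal Greedy solution and $O\subseteq OPT_w$ be the weight-$1$ edges of the optimum, the goal becomes $|O|\le\max\{2,d+1\}\cdot|T|$.

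The charging scheme I would use assigns each $(u,v)\in O$ fully to the connected components $\mathcal{P}$ of $T$: if $(u,v)\in T$, charge $1$ to its component; if $(u,v)$ was eliminated due to $\mathcal{B}$, charge $1$ to the component of the endpoint whose degree in $S$ had just reached $b$; if eliminated due to $\mathcal{A}$, charge $1$ to $P_u$. Because $w(P)=|E(P)|\ge p-1$ for a connected component with $p=|V(P)|$ nodes, summing the per-component bound reduces everything to showing that each $P$ receives total charge at most $\max\{2,d+1\}\cdot(p-1)$. The total charge on $P$ decomposes into (a) $OPT$-edges with both endpoints inside $V(P)$, which are bounded by $d\cdot p$ using the $d$-sparsity of $OPT_w$ restricted to $V(P)$; and (b) cross $OPT$-edges with exactly one endpoint in $V(P)$ that get assigned to $P$ by the elimination rules.

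The analysis of (b) is the main technical burden, and it splits on elimination type. For cross $\mathcal{B}$-edges charged to $P$, the degree-$b$ endpoint is in $V(P)$ and has all $b$ of its $T$-neighbors inside $V(P)$ (since $P$ is a $T$-component), forcing $p\ge b+1$ and leaving ample room against the target $(d+1)(p-1)$. For cross $\mathcal{A}$-edges, the $(u,v)$-path through $S$ at the critical iteration must exit $V(P)$ through a weight-$0$ $S$-edge (otherwise $u$ and $v$ would lie in the same $T$-component), and the number of such exits out of $V(P)$ is itself constrained by the $d$-sparsity of $S$; together with the internal bound, a short case split on whether $p\le d+1$ or $p> d+1$ should yield the desired $(d+1)(p-1)$ bound on the total charge. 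I expect this cross-$\mathcal{A}$ accounting to be the hardest step, because the $(u,v)$-path can weave through weight-$0$ edges and so $u$ and $v$ may belong to different $T$-components, which breaks the cleaner structural picture from the proof of Theorem \ref{thm:ABC}.

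The $\max$ with $2$ covers the small-$d$ regime (e.g.\ Maximum Spanning Tree, where $d<1$), in which the per-component bound $(d+1)(p-1)$ would be weaker than $2(p-1)$ and individually too lossy for small components. In that regime I would argue directly: every $(u,v)\in O\setminus T$ has, at its critical iteration, at least one weight-$1$ $S$-edge incident to $u$ or $v$ --- namely a degree-$b$ witness in the $\mathcal{B}$ case, or the first edge of the $(u,v)$-path in the $\mathcal{A}$ case, which is undominated hence of weight $\ge w(u,v)=1$. A standard endpoint-counting argument against these local witnesses then gives $|O|\le 2|T|$. For tightness, Example \ref{example:MST} already realizes ratio $2$ on Maximum Spanning Tree (where $d<1$), while for $d\ge 1$ a natural modification of the construction in the proof of Theorem \ref{thm:ABC}, with the component-size bound relaxed and $\mathcal{A}$ chosen so that feasible solutions are exactly the $d$-sparse subgraphs, matches the bound $d+1$.
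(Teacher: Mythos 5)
There is a genuine gap at the heart of your reduction. You propose to bound the charge received by each component $P$ of the weight-$1$ subgraph $T$ by $\max\{2,d+1\}(p-1)$ and then invoke $w(P)\ge p-1$. That per-component bound is false. Take $b$-Matching ($\mathcal{A}$ = all subgraphs, $d=b/2$) and let $P$ be a clique $K_{b+1}$ of weight-$1$ edges in $S$, so every node of $P$ has $S$-degree exactly $b$; let $OPT$ give each of these $b+1$ nodes $b$ incident weight-$1$ edges leaving $V(P)$, all eliminated due to $\mathcal{B}$ with witness in $P$. Your scheme charges $P$ a total of $(b+1)b$, while $\max\{2,d+1\}(p-1)=(\tfrac{b}{2}+1)b$ --- off by a factor approaching $2$. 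The theorem survives only because $w(P)=\binom{b+1}{2}\gg p-1$; the correct target is $\max\{2,d+1\}\cdot w(P)$, i.e.\ you must measure components by their \emph{edge count}, not by $p-1$ (the paper's proof explicitly warns that, unlike Theorem~\ref{thm:ABC}, a component here may be charged more than $(d+1)(p-1)$). Relatedly, your bound of $d\cdot p$ on internal $OPT$-edges already exceeds $(d+1)(p-1)$ whenever $p<d+1$, so even part (a) of your decomposition does not fit the budget you set for it.

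What is missing is the mechanism that lets one charge \emph{edges} of $P$ rather than the component as a whole. The paper first proves (via Hall's theorem on a bipartite graph with $d$ copies of each node) that the edges of $OPT$ can be assigned to their endpoints with at most $d$ per node, then reassigns each $\mathcal{B}$-eliminated edge to its degree-$b$ witness. Only \emph{b-nodes} (nodes with $b$ incident weight-$1$ edges of $S$) can then carry more than $d$, and at most $b$; a $b$-node spreads its charge over its $b$ incident edges of $P$ (at most $1$ each), while every other node routes its $\le d$ charge to one incident edge along a spanning tree toward a chosen $b$-node. Each edge of $P$ then carries at most $\max\{2,d+1\}$, which is exactly where the $\max$ with $2$ comes from --- no separate small-$d$ argument is needed (and your proposed one is too weak as stated: a single edge of $T$ can serve as the ``weight-$1$ witness'' for up to $2(b-1)$ eliminated $OPT$-edges, so endpoint counting alone does not yield $|O|\le 2|T|$). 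Finally, the cross-$\mathcal{A}$ accounting you correctly identify as the hard step is not resolved by counting ``exits'' via the sparsity of $S$; the paper handles it with the $u_0$ device (the first node of $P$ to acquire an adjacent weight-$0$ edge in $S$), for which every weight-$1$ $\mathcal{A}$-eliminated $OPT$-edge must have its other endpoint inside $P$, capping that node's charge at $\min\{d,p-1\}$ in the no-$b$-node case. Your tightness discussion is fine in spirit, though the paper's $d+1$ lower bound uses a restricted planar attachment set rather than a modification of Example~2.
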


\begin{proofsketch}
As with our proof of Theorem \ref{thm:ABC} for general ABC Systems, we only need to consider instances with weights $\{0,1\}$ due to Theorem \ref{thm:binary}. However, the charging schemes and proofs for ABC Systems and AB Systems differ significantly. To lower the approximation factor from $b+1$ to $\max\{2, d+1\}$, we have to be more selective about where we charge the edges of $OPT$. For simplicity, we first assign the edges of $OPT$ to their endpoints, before considering the total charge to all the nodes in any component. Since $OPT$ is $d$-sparse, the edges of $OPT$ can be assigned to their endpoints such that each node is assigned at most $d$ edges. We then take such an assignment and for each edge of $OPT$ eliminated due to $\mathcal{B}$ we change its assignment, if necessary, to the node which caused its elimination. Let $P$ be a component of the subgraph of $S$ containing only weight 1 edges, and suppose $p=|P|$. Then, similarly to the proof of Theorem \ref{thm:ABC}, we must show that this component will be charged at most $\max\{2, d+1\}w(P)$, but unlike before, components may be charged more than $(d+1)(p-1)$ if $w(P) > p-1$.

Now we consider two cases based on whether any node in a component was charged an edge of $OPT$ eliminated due to $\mathcal{B}$. If there such a node in a component, then it must be possible to distribute the charge on the nodes over the edges of the component directly so that no edge is charged more than $\max\{2,d+1\}$ times its weight. If there is no such node, then we show that at least one node in the component must be charged at most $p-1$ and the rest are charged at most $d(p-1)$, cumulatively providing a charge at most $(d+1)(p-1)$ which can then be distributed over the edges in the component. Once again, we provide a family of examples to show that this bound is tight.
\end{proofsketch}

\begin{proof}
In other words, we will show that $\alpha_{AB} = \max\{2, d+1\}$. For AB Systems, the fact that $c = n$ means no edge is eliminated due to $\mathcal{C}$. This leaves us with only 3 cases which can occur at the critical iteration of an edge.
%With just these cases, each edge of $OPT$ has its full weight charged to one of the components containing one of its endpoints.
We begin by creating an assignment of edges of $OPT$ to the nodes to construct our charging scheme. We use the same notation as in the proof of Theorem \ref{thm:ABC}.

As before, we know that the weight of a component $P$ is at least that of its maximum spanning tree, $w(P) \geq p - 1$. However, in certain cases, our strategy is different than the ABC Systems proof. We ensure that each component $P$ is charged at most $\max\{2,d+1\}w(P)$, but a component may be charged more than $\max\{2, d+1\}(p-1)$.

As in the proof of Theorem \ref{thm:ABC}, for each edge $(u,v)$ of $OPT$, we charge its weight between the components containing its endpoints. We need only to charge the edges of $OPT$ such that $w(u,v) = 1$, because the weight 0 edges do not contribute to the value of $w(OPT)$.

\begin{claim}\label{claim:flow}
There exists an assignment of edges of $OPT$ to their endpoints such that each node is assigned at most $d$ edges.
\end{claim}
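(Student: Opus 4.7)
The plan is to prove this as a standard graph orientation result: since $OPT$ is $d$-sparse, by a Hall-type argument we can orient each edge of $OPT$ toward one endpoint so that no vertex has in-orientation more than $d$. This orientation is precisely the desired assignment.

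More concretely, I would set up an auxiliary bipartite graph $H$ with the edge set of $OPT$ on the left side, and on the right side $d$ ``slots'' for each vertex $v \in \mathcal{N}$ (i.e., $d$ copies of every vertex). In $H$, connect each edge $e = (u,v) \in OPT$ to all $d$ copies of $u$ and all $d$ copies of $v$. Any matching in $H$ that saturates every left vertex corresponds to an assignment of each edge of $OPT$ to one of its endpoints; furthermore, since each vertex has only $d$ slots, no endpoint can receive more than $d$ edges. Thus it suffices to produce such a matching, which I would do by verifying Hall's condition on the left side.

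For Hall's condition, consider an arbitrary subset $T$ of edges of $OPT$, and let $V_T$ be the set of vertices incident to at least one edge of $T$. The neighborhood of $T$ in $H$ consists of exactly the $d$ slots of each vertex in $V_T$, so $|N(T)| = d \cdot |V_T|$. Now $F = (V_T, T)$ is a subgraph of $OPT$ with $|V(F)| = |V_T|$ and $|E(F)| = |T|$, so the $d$-sparsity of $OPT$ (Definition~\ref{def:sparsity}) gives $|T|/|V_T| \leq d$, i.e., $|T| \leq d \cdot |V_T| = |N(T)|$. Hence Hall's condition holds, the desired matching exists, and the corresponding orientation assigns each edge of $OPT$ to one endpoint with every vertex receiving at most $d$ edges.

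There is essentially no obstacle beyond noticing that the sparsity condition is exactly what Hall's theorem needs; the main thing to be careful about is that $V_T$ must be taken to be only the vertices actually touched by $T$ (not all of $\mathcal{N}$), since otherwise the inequality from sparsity would not be tight enough. Once $V_T$ is defined this way, the subgraph $(V_T, T)$ is genuinely a subgraph of $OPT$ and the sparsity bound applies directly.
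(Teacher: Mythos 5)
Your proof is essentially identical to the paper's: the same bipartite gadget with $d$ copies of each vertex, the same verification of Hall's condition via the sparsity bound applied to the subgraph induced by the edges in $T$. The only point worth noting is that (as the paper itself remarks in passing) the construction as written assumes $d$ is an integer, and when $d$ is fractional one must duplicate on the edge side instead; this does not affect the correctness of your argument in the integral case.
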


\begin{proof} Let $N_d$ be a set of $dn$ nodes containing $d$ duplicates of each node in $\mathcal{N}$.
Let $O$ be a set of nodes where each node corresponds to an edge of $OPT$.
Construct a bipartite graph by building an edge from each node in $O$ to each of the $2d$ nodes in $N_d$ corresponding to the endpoints of the edge of $OPT$ it represents.

Consider any subset of nodes $R \subseteq O$ corresponding to a subset of edges in $OPT$. Let $V(R)$ be the set of endpoints in $\mathcal{N}$ of all edges of $OPT$ represented in $R$. Let $R'$ be the set of all edges of $OPT$ with both endpoints in $V(R)$, so that $R \subseteq R'$. Since $R'$ can have sparsity at most $d$, it follows that $|R'| \leq d \cdot |V(R)|$. Therefore, if we consider the $d \cdot |V(R)|$ nodes in $N_d$ corresponding to the duplicates of $V(R)$, which all have at least one edge to a node in $R$, we have that $|R| \leq |R'| \leq d \cdot |V(R)|$. By Hall's Condition, we can create a perfect matching between $O$ and $N_d$.

Take this perfect matching as described above, and assign each edge of $OPT$ to the endpoint in $\mathcal{N}$ corresponding to the node in $N_d$ to which they are matched. Since there $d$ duplicates of each node in $N_d$, each node in $\mathcal{N}$ can be assigned at most $d$ edges.

Note that while the proof above is written assuming that $d$ is an integer to duplicate the nodes, when $d$ is fractional the same result holds using a similar argument by duplicating the nodes in $O$.
\end{proof}

However, this assignment is not sufficient for our charging scheme. There may be too much assigned to small components $P$, and even to nodes which have no adjacent weight 1 edges in $S$ at all would be charged $d$, when they should not be charged at all. Therefore, we take this assignment and alter it to create our charging scheme so that every component $P$ is charged at most $(d+1)w(P)$.

Take the (possibly fractional) assignment of edges of $OPT$ to their endpoints from Claim \ref{claim:flow}, and for every edge of $OPT$ which was eliminated due to $\mathcal{B}$ change its assignment, if necessary, to be entirely to the endpoint which had degree $b$ at its critical iteration.

Given this new assignment, our charging scheme is now simple. For each node in $P$ charge the weight of the edges of $OPT$ assigned to it to $P$. We now show that each component $P$ has been charged at most $\max\{2,d+1\}w(P)$.

Let a \emph{b-node} be defined as a node with exactly $b$ adjacent edges of weight 1 in $S$, and therefore $b$ adjacent edges in $P$. By construction, only \emph{b-nodes} can receive a charge greater than $d$ and at most $b$, while all other nodes are charged at most $d$. If a node has degree $b$ in $S$ but is not a \emph{b-node} because one or more of its adjacent edges in $S$ has weight 0, then any edges of $OPT$ assigned to it which were eliminated due to $\mathcal{B}$ must have weight 0. The is because if any of the eliminated edges had weight 1, an edge of weight 0 could not be undominated at an iteration when the weight 1 edge is still adjacent to it in $E$. Therefore, a node which is not a \emph{b-node} can have at most $d$ edges of weight 1 assigned to it.

\textbf{Case 1)} $P$ contains at least one \emph{b-node}\\
Instead of showing that the total charge on $P$ is bounded by $\max\{2, d+1\}w(P)$, in this case it is simpler to think of the charge from the edges of $OPT$ as assigned to each particular node. The charge on $P$ is the cumulative charge on all of the nodes in $P$, which we distribute over the edges of $P$. For any component $P$, if the total charge to $P$ can be distributed over its edges so that each edge is charged at most $\max\{2, d+1\}$, then the total charge to $P$ is at most $\max\{2, d+1\} w(P)$. By summing over all $P\in \mathcal{P}$ we get $OPT(w) \leq \max\{2, d+1\}w(S)$.

All \emph{b-nodes} have charge at most $b$, which we can distribute so that each of its $b$ adjacent edges is charged at most 1. Now select one \emph{b-node} and consider the maximum spanning tree of $P$. For all nodes which are not \emph{b-nodes} distribute all $d$ of their charge to their adjacent edge on the path to the selected \emph{b-node} in this maximum spanning tree.

If any edge is between two \emph{b-nodes}, it is charged at most 2. If any edge has exactly one \emph{b-node} as an endpoint, it may be charged at most 1 from this endpoint and $d$ from the other for a total of $d+1$. For any edge between two nodes which are not \emph{b-nodes}, it is only charged from one of its endpoints, which has charge at most $d$. Since every edge of $P$ has at most $\max\{2, d+1\}$ charge, the total charge over $P$ is at most $\max\{2, d+1\}w(P)$.

\textbf{Case 2)} $P$ does not contain any \emph{b-nodes}\\
Let $u_0$ be the first node in $P$ to have an edge of weight 0 added adjacent to it in $S$ at some iteration of the Ordinal Greedy algorithm. If there is no node in P with an adjacent weight 0 edge in $S$, let $u_0$ be any node in $P$.

We show that the total charge to any component $P$ is at most $\max\{2, d+1\}(p-1)$ by showing that there are at most $d(p-1)$ edges of $OPT$ adjacent to the nodes of $P$, excluding those that have $u_0$ as their only endpoint in $P$. And there are at most $p-1$ edges of $OPT$ which have $u_0$ as their only endpoint in $P$.

Since there are no \emph{b-nodes}, all nodes in $P$ must be charged at most $d$. Clearly, $u_0$ could only be charged by edges which were eliminated due to $\mathcal{A}$ or are included in $P$. This means that for all edges of $OPT$ of weight 1 charged to $u_0$, their other endpoint must be in $P$. Therefore the total charge on $u_0$ is at most $min\{d, p-1\}$. Note that we cannot assume these edges have already been charged to $P$.

The sum of the total charge to the nodes of $P$ is at most $d (p-1) + min\{d, p-1\}$ which we can distribute over at least $(p-1)$ edges of $P$. We have that $\frac{d (p-1) + min\{d, p-1\}}{p-1} \leq d+1$. Our charging scheme ensures that each component $P$ is charged at most $(d+1)(p - 1) \leq (d+1) w(P) \leq \max\{2, d+1\}w(P)$.

Together with Case 1, we sum over $P \in \mathcal{P}$ and get $w(OPT) \leq \max\{2,d+1\}w(S)$, as desired.

\vskip 3pt
We now show that the above bound is tight.
Let $\mathcal{N} = \{u_1, ..., u_k, v_1, ..., v_k\}$.
Let $w(u_i, u_j) = 1$ for all $i \neq j$, $w(u_i, v_i) = 1+\epsilon$ for all $i \leq k$ for some infinitesimal $\epsilon$, and let all other edges have weight 0.
 Let $c = n$, $b = c-1$ and $\mathcal{A}$ = all planar subgraphs such that no cycle may contain an edge $(u_i, v_i)$ for any $i \leq k$.
This implies $d \leq \frac{6k-6}{2k} = \frac{3k-3}{k}$ because any planar graph on $n$ nodes has at most $3n-6$ edges.

For appropriate preference orderings $\sigma$, Ordinal Greedy selects the edges $(u_i,v_i)$ for all $i \leq k$ first.
Now edges $(v_i, v_j)$ with weight 0 may be undominated.
If any edge $(v_i, v_j)$ or $(u_i, u_j)$ is selected or eliminated the other must be eliminated at that iteration, but no ordinal algorithm can decide optimally between these edges because they are not adjacent.

Suppose Ordinal Greedy selects all edges $(v_i, v_j)$ where $j = i+1$, causing all edges $(u_i, u_j)$ with weight 1 to be eliminated due to $\mathcal{A}$, and proceeds by selecting additional $(v_i, v_j)$ edges until the solution is maximally planar. This yield $w(S) = k(1+\epsilon)$, whereas the optimal solution selects a maximal planar subgraph of edges $(u_i, u_j)$. The edges $(u_i, u_j)$ have a combined weight of $3k-6$ because any clique on $k$ nodes has a maximally planar triangulation of size $3k-6$. The edges $(u_i,v_i)$ have combined weight $k(1+\epsilon)$, for a total of $w(OPT) = 4k - 6 + \epsilon k$. Therefore as $\epsilon \rightarrow 0$, $\alpha_{AB} \geq \frac{w(OPT)}{w(S)}  = \frac{4k-6}{k} = \frac{3k-6}{k} +1$, which asymptotically approaches $\frac{3k-3}{k} + 1 = d+1$ as $k \rightarrow \infty$.
\end{proof}

The example for Max Weight Matching in Section \ref{sec:bMatching} provides a lower bound example where $d = \frac{1}{2}$, so $d < 1$ and $\alpha \geq 2$.
%_______________________________%

%_______________________________%
\subsection{Important Cases of AB Systems}

Theorem \ref{thm:AB} establishes that for AB Systems in which solutions are always sparse, ordinal algorithms don't perform much worse than ones which know the true underlying edge weights. While our result in the previous section is quite general, it is worth noting how it applies to many important problems which happen to be special cases of AB Systems. Since all tours and cycles are 1-sparse, and all planar graphs are at most 3-sparse, we immediately arrive at the following corollaries:

\begin{corollary} \label{cor:MST} Ordinal Greedy always computes a $2$-approximation for Maximum Weight Spanning Tree, and this bound is tight.
\end{corollary}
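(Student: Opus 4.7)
The plan is to obtain this corollary as a direct instance of Theorem \ref{thm:AB}. First I would verify that Maximum Weight Spanning Tree fits the AB System template: the problem is defined by $c=n$ (unbounded component size), $b=c-1$ (effectively no degree bound), and $\mathcal{A}$ equal to all acyclic subgraphs of $G$. Heredity of $\mathcal{A}$ is immediate (any subset of a forest is a forest), and the attachment property holds because if adding an edge $e=(u,v)$ to a forest $F$ creates a cycle, then $u$ and $v$ already lie in the same tree of $F$, hence there is a $(u,v)$-path in $F$. So this is a bona fide AB System.

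Next I would establish the sparsity bound $d=1$. For any subgraph $F \subseteq S$ of a feasible solution $S$, $F$ is itself a forest, so $E(F) \le V(F)-1 \le V(F)$, giving $E(F)/V(F) \le 1$; and by taking $F$ to be a single large tree this ratio can be made arbitrarily close to $1$, so the sparsity is exactly $d=1$. Applying Theorem \ref{thm:AB} with $d=1$ yields an approximation factor of $\max\{2,d+1\} = \max\{2,2\} = 2$, which proves the upper bound.

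For tightness, I would appeal to the construction already given in Example \ref{example:MST}. There the Ordinal Greedy solution has total weight $k(1+\epsilon) + (k-1)\epsilon$ while the optimal spanning tree has weight $k(1+\epsilon) + (k-1)$, so taking $\epsilon \to 0$ and $k \to \infty$ gives
\[
\frac{w(OPT_w)}{w(S)} \;\longrightarrow\; \frac{k + (k-1)}{k} \;=\; \frac{2k-1}{k} \;\longrightarrow\; 2.
\]
Thus no constant smaller than $2$ can bound the approximation ratio.

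There is no real obstacle here: the whole point is that Corollary \ref{cor:MST} falls out of the much more general Theorem \ref{thm:AB} once sparsity is observed, and the matching lower bound has essentially been constructed already. The only small care required is to argue that forests achieve sparsity exactly $1$ (as opposed to strictly less than $1$), which is needed so that the ``$d+1$'' branch of the maximum does not drop below $2$.
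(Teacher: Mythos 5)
Your proposal is correct and takes essentially the same route as the paper: Corollary \ref{cor:MST} is obtained by observing that spanning trees form an AB System whose feasible solutions are $1$-sparse, invoking Theorem \ref{thm:AB} with $d=1$ to get the factor $\max\{2,d+1\}=2$, and citing Example \ref{example:MST} for tightness. (One small remark: your final worry about forests achieving sparsity \emph{exactly} $1$ is unnecessary --- even if $d<1$, the bound $\max\{2,d+1\}$ is still $2$, so the upper bound is unaffected.)
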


See Example \ref{example:MST} for example demonstrating the lower bound for Maximum Spanning Tree is tight.

\begin{corollary} \label{cor:TSP} Ordinal Greedy always computes a $2$-approximation for Maximum Traveling Salesperson, and this bound is tight.
\end{corollary}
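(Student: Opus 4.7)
The plan is to obtain the $2$-approximation as an immediate consequence of Theorem \ref{thm:AB} and then construct a family of instances that realize the ratio.

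For the upper bound, I would first confirm that Maximum Traveling Salesperson instantiates the ABC System framework listed in Section \ref{preliminaries} with $c = n$, $b = 2$, and $\mathcal{A}$ the collection of subgraphs containing no non-Hamiltonian cycle. Heredity for $\mathcal{A}$ is immediate, and the attachment property holds because adding an edge $(u,v)$ to some $F \in \mathcal{A}$ can only close a non-Hamiltonian cycle if $u$ and $v$ already lie on a common path in $F$. Every feasible solution is a subgraph of some Hamiltonian cycle, hence a disjoint union of simple paths or the tour itself, and every such subgraph $F$ satisfies $|E(F)|/|V(F)| \leq 1$, with equality achieved by the tour. Thus the implied sparsity is $d = 1$, and Theorem \ref{thm:AB} immediately yields the approximation factor $\max\{2, d+1\} = 2$.

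For tightness, I would build a family of instances on $n = 2k$ nodes partitioned as $\{u_1, \ldots, u_k, v_1, \ldots, v_k\}$ and parameterized by a small $\epsilon > 0$. The weights would place $1 + \epsilon$ on each matching edge $(u_i, v_i)$ and weight $1$ on the edges of a single alternative Hamiltonian cycle that uses no matching edge, while giving all remaining edges weight $0$. The preference orderings $\sigma$ would then be chosen consistent with these weights so that (i) each $(u_i, v_i)$ is mutually top-ranked at the start, forcing Ordinal Greedy to commit to the entire perfect matching, and (ii) after this commitment, the attachment constraint $\mathcal{A}$ progressively invalidates every surviving weight-$1$ edge before it can become undominated, so Ordinal Greedy is forced to complete the Hamiltonian cycle using only weight-$0$ edges. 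This would yield $w(S) \to k$ and $w(\mathrm{OPT}) \to 2k$, so the ratio approaches $2$ as $\epsilon \to 0$.

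The main obstacle is engineering step (ii). After the matching is committed, the weight-$1$ edges form a structure in which each endpoint may have several equally-preferred weight-$1$ neighbors, so Ordinal Greedy can pick among them in orders that fuse two matched pairs into a short component whose subsequent weight-$1$ completions are then disallowed by $\mathcal{A}$. The construction must orchestrate weights and preferences so that \emph{every} such order eventually forces weight-$0$ completions for at least $k - o(k)$ of the remaining edges, rather than letting Ordinal Greedy recover by finding an undominated weight-$1$ cross edge. This parallels the role of $\mathcal{A}$ in Example \ref{example:MST} for Maximum Spanning Tree, where the inability of Ordinal Greedy to distinguish between incomparable edges leads it to pick a low-weight completion; once the analogous structure is set up for tours, a direct weight accounting gives a ratio tending to $2$ in the limit.
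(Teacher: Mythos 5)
Your upper bound is exactly the paper's argument: Max TSP is an AB System whose feasible solutions are $1$-sparse, so Theorem \ref{thm:AB} gives $\max\{2,d+1\}=2$. That part is fine.

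The lower bound, however, has a genuine gap, and it is not merely the unfinished ``engineering'' you flag at the end --- the construction you propose provably cannot reach ratio $2$. After Ordinal Greedy commits to the perfect matching $\{(u_i,v_i)\}$, every node still has one free degree slot (recall $b=2$), and every weight-$1$ edge of your alternative Hamiltonian cycle $C$ is still feasible (adding it creates no cycle and violates no degree bound) and undominated (its weight-$1$ endpoints' only heavier edges, the matching edges, are already gone from $E$). Hence the algorithm cannot terminate or fall back to weight-$0$ edges at this point: it must keep selecting weight-$1$ edges of $C$ until all remaining ones are eliminated. An edge of $C$ is eliminated due to $\mathcal{B}$ only when a cycle-neighbor of it has been selected, and due to $\mathcal{A}$ only when its endpoints are already joined by a path in $S$; either way each selected edge accounts for the elimination of only $O(1)$ edges of $C$, so any run of Ordinal Greedy picks up $\Omega(k)$ weight-$1$ edges (at least $\lceil 2k/3\rceil$, the size of a minimum maximal matching of $C_{2k}$, even ignoring $\mathcal{A}$). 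This gives $w(S)\geq k(1+\epsilon)+2k/3$ against $w(OPT)\approx 2k$, a ratio bounded by roughly $6/5$, not $2$. The structural idea you are missing is that the $(1+\epsilon)$-weight edges must themselves form a single Hamiltonian path on a set of ``hub'' nodes, with \emph{every} weight-$1$ edge of $OPT$ incident to a hub: once greedy builds that path, the interior hubs are degree-saturated and the path ends are blocked by $\mathcal{A}$, so all weight-$1$ edges are eliminated in one stroke and greedy can only finish with weight-$0$ edges. This is precisely what the paper's construction does, placing weight $1+\epsilon$ on the path $u_1u_2\cdots u_k$ and weight $1$ only on edges of the form $(v_i,u_{j})$ (plus a few closing edges among the $u$'s), yielding $w(S)=(k-1)(1+\epsilon)$ versus $w(OPT)=2k-3$.
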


\begin{proof} \label{proof:TSP_lower}
Let the set of agents be $\mathcal{N} = \{u_1, ..., u_k, v_1, ..., v_{k-3}\}$.
Let $w(u_i, u_{i+1}) = 1 + \epsilon$ for all $i \leq k-1$ for some infinitesimal $\epsilon$.
Let $w(v_i, u_{i+1}) = 1$ and $w(v_i, u_{i+2}) = 1$ for all $i \leq k-3$.
Let $w(u_1, u_{k-1}) = 1$ and $w(u_2, u_k) = 1$ and $w(u_1, u_k) = 1$.
Let all other edges have weight 0.

Suppose the Ordinal Greedy algorithm begins by selecting $(u_i, u_{i+1}) = 1+\epsilon$ for all $i < k$. This creates a path of length $k-1$ which causes all weight 1 edges in the graph to be eliminated due to $\mathcal{A}$ or $\mathcal{B}$. Ordinal Greedy may then proceed by selecting some set of weight 0 edges, such as $(u_1,v_1)$, $(u_k,v_{k-3})$, and $(v_i,v_{i+1})$ for $i < k-3$. Therefore, $w(S) = (k-1)(1+\epsilon)$. Meanwhile, the optimal solution is to select all of the weight 1 edges of the graph: $(v_i, u_{i+1})$ and $(v_i, u_{i+2})$ for all $i \leq k-3$ as well as $(u_1, u_{k-1})$, $(u_2, u_k)$, and $(u_1,u_k)$. This yields $w(OPT) = 2k - 3$. Therefore, $\alpha_{TSP} \geq \frac{w(OPT)}{w(S)} = \frac{2k-3}{(k-1)(1+\epsilon)}$ which asymptotically approaches 2 as $k \rightarrow \infty$ and $\epsilon \rightarrow 0$.
\end{proof}

\begin{corollary} \label{cor:planar} Ordinal Greedy always computes a $4$-approximation for Max Weight Planar Subgraph.
\end{corollary}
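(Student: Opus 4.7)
The plan is to derive Corollary \ref{cor:planar} as a direct instance of Theorem \ref{thm:AB}. Max Weight Planar Subgraph is the ABC System with $c = n$, $b = c-1$, and $\mathcal{A}$ equal to the collection of all planar subgraphs of $G$. Since the component-size constraint is vacuous, this is an AB System, so Theorem \ref{thm:AB} applies once we establish the appropriate sparsity bound on every feasible solution.

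The key step is to observe that any planar graph on $v$ vertices has at most $3v - 6$ edges (Euler's formula). Consequently, for every subgraph $F$ of any feasible solution $S$, which is itself planar by heredity of $\mathcal{A}$, we have $E(F) \leq 3 V(F) - 6 < 3 V(F)$, so $E(F)/V(F) < 3$. Hence $S$ is $d$-sparse for $d \leq 3$, since the supremum of $E(F)/V(F)$ over planar subgraphs approaches $3$ (attained in the limit by triangulations).

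Plugging $d = 3$ into Theorem \ref{thm:AB} gives an approximation factor of $\max\{2, d+1\} = \max\{2, 4\} = 4$, which establishes the claimed bound. There is no real obstacle here: the corollary is essentially a one-line specialization of Theorem \ref{thm:AB} once sparsity of planar graphs is invoked. The only subtle point worth mentioning is that Euler's bound must be applied to every subgraph of a feasible $S$, not only to $S$ itself, in order to match Definition \ref{def:sparsity}; this is immediate because subgraphs of planar graphs are planar.

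Unlike the corollaries for Max Spanning Tree and Max TSP, no matching lower bound of $4$ is claimed here, consistent with the remark in the introduction (and Table \ref{table}) that the planar subgraph bound is the one result in the paper not known to be tight. Consequently, no extremal construction needs to accompany the proof.
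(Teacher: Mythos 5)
Your proposal is correct and matches the paper's own argument: the paper derives Corollary \ref{cor:planar} directly from Theorem \ref{thm:AB} by noting that all planar graphs are at most $3$-sparse (via Euler's bound $E(F)\leq 3V(F)-6$ applied to every subgraph), giving $\max\{2,d+1\}=4$. Your additional remarks about heredity of planarity and the absence of a matching lower bound are consistent with the paper as well.
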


More generally, the same arguments can be applied to any problem where the goal is to find maximum-weight subgraphs with some excluded minor, finding maximum-weight graphs with small treewidth or arboricity, as well as a variety of other graph problems.
%_______________________________%

%_______________________________%
\section{$b$-Matching} \label{sec:bMatching}
For any ABC System where $c = n$ and $\mathcal{A}$ = all subgraphs of $G$, the only constraint is that each node must have degree at most $b$. This is equivalent to the well-known problem of Max Weight $b$-Matching. In this case, the approximation provided by Ordinal Greedy improves greatly over general AB Systems. In fact, it provides a strict $2$-approximation regardless of the value of $b$.

\begin{theorem} \label{thm:bMatching}
For any ABC System on graph $G$, where $c = n$ and $\mathcal{A}$ = all subgraphs of $G$, Ordinal Greedy always constructs a solution within a factor of 2 of the optimal solution. This bound is tight.
\end{theorem}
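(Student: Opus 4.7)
The plan is to reduce to binary weights via Theorem~\ref{thm:binary} and then carry out a charging argument that exploits the fact that for $b$-Matching the only way an edge of $OPT$ fails to be in $S$ is elimination due to $\mathcal{B}$ (exceeding the degree bound). After this reduction we may assume $w:\mathcal{E}\to\{0,1\}$, and it suffices to show that the number of weight-$1$ edges in $OPT$ is at most twice the number of weight-$1$ edges in $S$.

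The key observation, already used in the ABC analysis, is that if a weight-$1$ edge $(u,v)\in OPT$ is eliminated at its critical iteration because some endpoint (say $u$) has degree $b$ in $S$, then every one of those $b$ incident edges of $S$ must have weight exactly $1$. If any had weight $0$, that weight-$0$ edge could not have been undominated at the time it was added, since the adjacent edge $(u,v)$ of weight $1$ was still available in $E$. Hence every node $u$ that ``causes'' the elimination of some weight-$1$ $OPT$ edge is \emph{full of weight-$1$ $S$-edges}.

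For each node $u$, let $\alpha_u$ be the number of $OPT$ edges at $u$ that also lie in $S$ (these will be self-charged) and let $\beta_u$ be the number of weight-$1$ $OPT$ edges at $u$ that are not in $S$ and whose elimination was caused by $u$. Since $OPT$ is a $b$-matching, $\alpha_u+\beta_u\le b$. I propose the charging scheme: each weight-$1$ edge in $S\cap OPT$ absorbs charge $1$ for itself; and each weight-$1$ $OPT$ edge $(u,v)\notin S$ (with $u$ chosen as a full endpoint, arbitrarily if both are full) distributes its unit of charge uniformly over the $b-\alpha_u$ weight-$1$ $S$-edges incident to $u$ that are \emph{not} in $OPT$. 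Because $\beta_u\le b-\alpha_u$, each such non-$OPT$ slot receives total charge at most $1$ from the $u$-side. (When $\alpha_u=b$ we have $\beta_u=0$, so nothing need be routed.)

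Verifying the $2$-bound is then direct: a weight-$1$ edge $(u,x)\in S\cap OPT$ receives exactly $1$ (the distribution from either side is routed only to non-$OPT$ slots), while a weight-$1$ edge $(u,x)\in S\setminus OPT$ receives at most $1$ from $u$ and at most $1$ from $x$, for a total of at most $2$; summing over $S$ yields $w(OPT)\le 2w(S)$. The main obstacle is arranging the distribution so that the two endpoints can be charged independently without overflow; the invariant $\alpha_u+\beta_u\le b$ together with the decision to route onto non-$OPT$ slots only is what makes the argument crisp. For tightness, I would take nodes $A_1,\dots,A_b,B_1,\dots,B_b,C_1,\dots,C_{2b}$ with $w(A_i,B_j)=1+\epsilon$ for all $i,j$, $w(A_i,C_j)=1$ for $j\le b$, $w(B_i,C_{b+j})=1$ for $j\le b$, and weight $0$ elsewhere. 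With suitable preferences Ordinal Greedy selects the complete bipartite graph between the $A_i$ and $B_j$, filling every $A_i$ and $B_j$ to degree $b$, so $w(S)=b^{2}(1+\epsilon)$, whereas the optimal $b$-matching takes $K_{b,b}$ between $\{A_i\}$ and $\{C_1,\dots,C_b\}$ and $K_{b,b}$ between $\{B_i\}$ and $\{C_{b+1},\dots,C_{2b}\}$, giving $w(OPT)=2b^{2}$; the ratio tends to $2$ as $\epsilon\to 0$.
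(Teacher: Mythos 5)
Your proof is correct and follows essentially the same route as the paper's: reduce to binary weights via Theorem~\ref{thm:binary}, observe that an endpoint causing a $\mathcal{B}$-elimination of a weight-$1$ edge of $OPT$ must have all $b$ of its incident $S$-edges of weight $1$, and then distribute the charge so that each weight-$1$ edge of $S$ absorbs at most $2$. Your $\alpha_u/\beta_u$ bookkeeping just makes the paper's distribution step more explicit, and your tightness family works for every $b$ (the paper exhibits only $b=1$), but the underlying argument is the same.
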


\begin{proof} Assume $\bar{w}: \mathcal{E} \rightarrow \{0,1\}$. We proceed in the same way as the proof of Theorem \ref{thm:AB}. For each weight 1 edge $(u,v) \in OPT$ eliminated due to $\mathcal{B}$, charge its full weight to its endpoint which has degree $b$. All $b$ adjacent edges to this node in the greedy solution must have weight 1 to have been undominated before the critical iteration of $(u,v)$, so it is a $\emph{b-node}$. Since no node can have more than $b$ adjacent edges in the optimal solution, this means that the total charge on any node is bounded by the number of edges of weight 1 adjacent to this node in the greedy solution. Therefore, all nodes can distribute their charge over their adjacent edges in $S$ such that no edge in the greedy solution is charged more than 2.

Consider the following minimal example to show that the factor of $2$ is tight. Suppose $b = 1$. Let $\mathcal{N} = \{u_1, u_2, v_1, v_2\}$. Let $w(u_1, u_2) = w(v_1, v_2) = w(u_1, v_1) = 1$. Let $w(u_2, v_2) = 0$. For suitable $\sigma$, Ordinal Greedy selects $(u_1,v_1)$ first because it is preferred by both $u_1$ and $v_1$, and the only other edge it can subsequently add is $(u_2, v_2)$. This yields $w(S) = 1$ and $w(OPT) = 2$.
\end{proof}

Note that $b = 1$ is the problem Max Weight Matching, and our result generalizes the results from \cite{anshelevich2016blind} and \cite{preis1999linear}.
%_______________________________%

%_______________________________%
\section{Conclusion and Further Directions}
In this paper we identify a large class of problems we call ABC Systems for which ordinal preference information is sufficient for algorithms to provide good approximations to optimal, even without access to cardinal utilities. Previous work has shown that if agent preferences form a metric space, approximations for TSP and matching can improve in expectation \cite{anshelevich2016blind}. It remains to be seen how Ordinal Greedy performs on ABC Systems in expectation and how much the approximation factors for general ABC or AB Systems improve when this metric assumption holds. Also along the lines of previous work, it would be interesting to investigate whether truthful ordinal algorithms for ABC and AB Systems can compete with our non-truthful algorithm, much as \cite{anshelevich2016truthful} did for the problems first approached in \cite{anshelevich2016blind}. Lastly, we have seen that all solutions produced by Ordinal Greedy are pairwise stable, but it is unknown for our problems whether all pairwise stable solutions produce a good approximation to optimum (although it is easily seen to be true for MST and TSP).

\setcounter{secnumdepth}{0}
\section{Acknolwedgments}
This work was partially supported by NSF award CCF-1527497.

%\bibliography{Full_Version_001}

\begin{thebibliography}{10}

\bibitem{anshelevich2015approximating}
Elliot Anshelevich, Onkar Bhardwaj, and John Postl.
\newblock Approximating optimal social choice under metric preferences.
\newblock In {\em Proceedings of the Twenty-Ninth AAAI Conference on Artificial
  Intelligence}, pages 777--783, 2015.

\bibitem{anshelevich2017randomized}
Elliot Anshelevich and John Postl.
\newblock Randomized social choice functions under metric preferences.
\newblock {\em Journal of Artificial Intelligence Research}, 58:797--827, 2017.

\bibitem{anshelevich2016blind}
Elliot Anshelevich and Shreyas Sekar.
\newblock Blind, greedy, and random: algorithms for matching and clustering
  using only ordinal information.
\newblock In {\em Proceedings of the Thirtieth AAAI Conference on Artificial
  Intelligence}, pages 383--389, 2016.

\bibitem{anshelevich2016truthful}
Elliot Anshelevich and Shreyas Sekar.
\newblock Truthful mechanisms for matching and clustering in an ordinal world.
\newblock In {\em International Conference on Web and Internet Economics
  (WINE)}, 2016.

\bibitem{anshelevich2017tradeoffs}
Elliot Anshelevich and Wennan Zhu.
\newblock Tradeoffs between information and ordinal approximation for bipartite
  matching.
\newblock In {\em International Symposium on Algorithmic Game Theory (SAGT)},
  2017.

\bibitem{boutilier2015optimal}
Craig Boutilier, Ioannis Caragiannis, Simi Haber, Tyler Lu, Ariel~D Procaccia,
  and Or~Sheffet.
\newblock Optimal social choice functions: A utilitarian view.
\newblock {\em Artificial Intelligence}, 227:190--213, 2015.

\bibitem{brandt2016handbook}
Felix Brandt, Vincent Conitzer, Ulle Endriss, Ariel~D Procaccia, and
  J{\'e}r{\^o}me Lang.
\newblock {\em Handbook of computational social choice}.
\newblock Cambridge University Press, 2016.

\bibitem{calinescu2003new}
Gruia Calinescu, Cristina~G Fernandes, Howard Karloff, and Alexander
  Zelikovsky.
\newblock A new approximation algorithm for finding heavy planar subgraphs.
\newblock {\em Algorithmica}, 36(2):179--205, 2003.

\bibitem{caragiannis2016truthful}
Ioannis Caragiannis, Aris Filos-Ratsikas, S{\o}ren Kristoffer~Stiil
  Frederiksen, Kristoffer~Arnsfelt Hansen, and Zihan Tan.
\newblock Truthful facility assignment with resource augmentation: An exact
  analysis of serial dictatorship.
\newblock In {\em International Conference on Web and Internet Economics
  (WINE)}, 2016.

\bibitem{caragiannis2017subset}
Ioannis Caragiannis, Swaprava Nath, Ariel~D Procaccia, and Nisarg Shah.
\newblock Subset selection via implicit utilitarian voting.
\newblock {\em Journal of Artificial Intelligence Research}, 58:123--152, 2017.

\bibitem{caragiannis2011voting}
Ioannis Caragiannis and Ariel~D Procaccia.
\newblock Voting almost maximizes social welfare despite limited communication.
\newblock {\em Artificial Intelligence}, 175(9-10):1655--1671, 2011.

\bibitem{christodoulou2016social}
George Christodoulou, Aris Filos-Ratsikas, S{\o}ren Kristoffer~Stiil
  Frederiksen, Paul~W Goldberg, Jie Zhang, and Jinshan Zhang.
\newblock Social welfare in one-sided matching mechanisms.
\newblock In {\em International Conference on Autonomous Agents and Multiagent
  Systems}, pages 30--50. Springer, 2016.

\bibitem{dyer1985analysis}
ME~Dyer, LR~Foulds, and AM~Frieze.
\newblock Analysis of heuristics for finding a maximum weight planar subgraph.
\newblock {\em European Journal of Operational Research}, 20(1):102--114, 1985.

\bibitem{edmonds1971matroids}
Jack Edmonds.
\newblock Matroids and the greedy algorithm.
\newblock {\em Mathematical programming}, 1(1):127--136, 1971.

\bibitem{goel2017metric}
Ashish Goel, Anilesh~K Krishnaswamy, and Kamesh Munagala.
\newblock Metric distortion of social choice rules: Lower bounds and fairness
  properties.
\newblock In {\em Proceedings of the 2017 ACM Conference on Economics and
  Computation}, pages 287--304. ACM, 2017.

\bibitem{gross2017vote}
Stephen Gross, Elliot Anshelevich, and Lirong Xia.
\newblock Vote until two of you agree: Mechanisms with small distortion and
  sample complexity.
\newblock In {\em Proceedings of the Thirty-First AAAI Conference on Artificial
  Intelligence}, pages 544--550, 2017.

\bibitem{korte1978analysis}
Bernhard Korte and Dirk Hausmann.
\newblock An analysis of the greedy heuristic for independence systems.
\newblock {\em Annals of Discrete Mathematics}, 2:65--74, 1978.

\bibitem{mestre2006greedy}
Juli{\'a}n Mestre.
\newblock Greedy in approximation algorithms.
\newblock In {\em European Symposium on Algorithms}, volume 4168, pages
  528--539. Springer, 2006.

\bibitem{paluch20097}
Katarzyna~E Paluch, Marcin Mucha, and Aleksander Madry.
\newblock A 7/9-approximation algorithm for the maximum traveling salesman
  problem.
\newblock {\em Approximation, Randomization, and Combinatorial Optimization.
  Algorithms and Techniques}, pages 298--311, 2009.

\bibitem{preis1999linear}
Robert Preis.
\newblock Linear time 1/2-approximation algorithm for maximum weighted matching
  in general graphs.
\newblock In {\em STACS}, volume~99, pages 259--269. Springer, 1999.

\bibitem{procaccia2006distortion}
Ariel~D Procaccia and Jeffrey~S Rosenschein.
\newblock The distortion of cardinal preferences in voting.
\newblock In {\em In The Tenth International Workshop on Cooperative
  Information Agents (CIA): Lecture Notes in Computer Science (LNCS)}, volume
  4149, pages 317--331. Springer, 2006.

\bibitem{rado1942theorem}
Richard Rado.
\newblock A theorem on independence relations.
\newblock {\em The Quarterly Journal of Mathematics}, (1):83--89, 1942.

\bibitem{skowron2017social}
Piotr~Krzysztof Skowron and Edith Elkind.
\newblock Social choice under metric preferences: Scoring rules and stv.
\newblock In {\em Thirty-First AAAI Conference on Artificial Intelligence},
  pages 706--712, 2017.

\bibitem{vince2002framework}
Andrew Vince.
\newblock A framework for the greedy algorithm.
\newblock {\em Discrete Applied Mathematics}, 121(1):247--260, 2002.

\end{thebibliography}
%\bibliographystyle{plain}

\end{document}